\pdfoutput=1
\documentclass[11pt]{article}

\usepackage[utf8]{inputenc}
\usepackage[T1]{fontenc}
\usepackage[english]{babel}
\usepackage{lmodern}
\usepackage{microtype}

\usepackage{amsmath, amssymb, amsthm, mathtools}
\usepackage{bm}

\usepackage{booktabs}
\usepackage{multirow}
\usepackage{algorithm}
\usepackage{algpseudocode}

\usepackage[margin=2.5cm]{geometry}
\usepackage{graphicx}
\graphicspath{{figures/}}
\usepackage{xcolor}
\definecolor{orcidgreen}{HTML}{A6CE39}

\usepackage[round]{natbib}

\usepackage[hidelinks]{hyperref}
\newtheorem{theorem}{Theorem}[section]
\newtheorem{proposition}[theorem]{Proposition}

\theoremstyle{definition}

\newcommand{\E}{\mathbb{E}}
\newcommand{\Var}{\mathrm{Var}}
\newcommand{\PATP}{\mathrm{PATP}}
\newcommand{\PMM}{\mathrm{PMM}}

\providecommand{\proglang}[1]{\textsf{#1}}

\title{Parametrically Adaptive Transition Polynomial:\\
a Signed-Parity Continuous-$\alpha$ Extension of Kunchenko Stochastic Polynomials}
\author{\textbf{Serhii V.\ Zabolotnii}\,\href{https://orcid.org/0000-0003-0242-2234}{\textsuperscript{\textcolor{orcidgreen}{\scriptsize ORCID}}}\\[0.5em]
\small Department of Information, Multimedia Technologies and Design,\\
\small Cherkasy State Business College, Cherkasy 18028, Ukraine\\[0.2em]
\small State Scientific Research Institute of Armament and Military Equipment\\
\small Testing and Certification, Cherkasy, Ukraine\\[0.2em]
\small Department of Cybernetics and Applied Mathematics,\\
\small Uzhhorod National University, Uzhhorod, Ukraine\\[0.4em]
\small \texttt{zabolotnii.serhii@csbc.edu.ua}}
\date{}

\begin{document}
\maketitle

\begin{abstract}
Kunchenko's method of polynomial maximization provides a semiparametric
apparatus for parameter estimation under non-Gaussian errors, but its
classical power basis relies on finite higher-order integer moments. This
paper introduces the Parametrically Adaptive Transition Polynomial (PATP),
a signed-parity fractional-power family controlled by a continuous
parameter $\alpha \in [0,1]$. The quadratic exponent map $p_i(\alpha)$
connects the fractal regime $p_i(0)=1/i$, the degenerate linear point
$p_i(1/2)=1$, and the signed-parity integer-power regime $p_i(1)=i$. For
the degree-$S=2$ case we derive a closed-form variance-reduction
coefficient $g_2(\alpha)$ in terms of signed and absolute fractional
moments, identify the singular behavior at $\alpha=1/2$, and state the
moment and regularity conditions under which the formula is meaningful.
The construction should be read as a Form-B PATP analogue within
Kunchenko's generalized apparatus, not as an exact recovery of the
canonical even-power PMM basis at $\alpha=1$. On symmetric canonical laws
(Laplace, generalized Gaussian, Student-$t$) a fully reproducible \proglang{R}
pipeline implementing the $\mathbf{F}_2^{-1}\mathbf{b}$ normal-equation solver
(Algorithm~1) shows the estimator's Monte Carlo efficiency converging to the
closed-form $g_2(\alpha)$, and confirms the same factor for regression
coefficients. On a panel of symmetric real series of increasing tail weight
(equity-index and exchange-rate log-returns) the data-driven optimal $\alpha$
tracks the tail shape, yielding variance reductions over the sample mean that
grow from $13\%$ to $54\%$ in agreement with theory; benchmarks against six
robust location baselines are included. Lean~4 verified algebraic facts support the
structural part of the derivation, and the Cauchy law marks the boundary of
applicability (infinite reference variance).
\end{abstract}

\noindent\textbf{Keywords:} Kunchenko stochastic polynomials; method of
polynomial maximization; parametrically adaptive transition polynomial;
fractional power basis; variance reduction coefficient; non-Gaussian
distributions; heavy tails; semiparametric methods.

\medskip
\noindent\textbf{Code and data availability.} A fully reproducible R
pipeline and the Lean~4 formalization are publicly available at
\url{https://github.com/SZabolotnii/Ku-PATP-code-supplement} under an
MIT licence; running \texttt{Rscript R/run\_all.R} regenerates every
table and figure cited in this paper in under one minute on a single
thread.

\medskip
\noindent\textbf{Use of AI-assisted tools.} AI-assisted copy editing
(grammar, punctuation, and stylistic suggestions) was used during
manuscript preparation, and the AI coding assistant Claude Code
(Anthropic) was used to help write and debug the accompanying R
pipeline. All scientific content, derivations, and Lean proofs are the
author's own work, and the author has verified and takes full
responsibility for all code and numerical results.

\section{Introduction}\label{sec:intro}

\subsection{Kunchenko stochastic polynomials and the legacy of power bases}\label{sec:intro:kunchenko}

Kunchenko's theory of stochastic polynomials, developed by Yu.\,P.~Kunchenko
and systematized in the canonical monograph~\citep{kunchenko2002}, provides a
semiparametric language for working with non-Gaussian random processes through
a finite moment-cumulant description. Its applications include parameter
estimation by the method of polynomial maximization (PMM), hypothesis testing,
and pattern recognition in spaces with a generating element. In this broader
apparatus, PMM is the most important estimation branch, but the underlying
object is more general: a stochastic-polynomial representation whose behavior
depends crucially on the chosen basis.

For several decades the working basis of this school has effectively been the
classical integer power basis. This choice is natural: powers of the residual
connect the apparatus directly with central moments, cumulants, correlant
matrices, and the closed-form efficiency formulas that make PMM interpretable.
It also places Kunchenko's construction between ordinary least squares, which
uses essentially second-order information, and maximum likelihood, which
requires a full distributional model. For near-Gaussian variables whose
departure from normality is well summarized by finitely many higher-order
cumulants, this power-basis tradition is highly effective
~\citep{kunchenko2002,kunchenko2006stochastic}.

The same tradition, however, creates a structural limitation: power bases
inherit the moment requirements of the powers used in the stochastic
polynomial. The familiar second-degree PMM efficiency formula, recalled in
\S\ref{sec:bg:integer}, already depends on skewness and excess kurtosis. For
heavy-tailed distributions these quantities may be undefined or empirically
unstable, and the matrix of centered correlants in Kunchenko's construction
may cease to be meaningful.

This issue is not only formal. Cauchy and stable laws, high-variance
log-normal models, and empirical distributions of financial returns or
telecommunications loads routinely exhibit tail behavior for which third- and
fourth-order summaries are unreliable
~\citep{samorodnitsky1994stable,nolan2020stable,shao1993signal}.
Thus the classical power-basis version of the apparatus can lose its practical
advantage exactly in the regime where a semiparametric method is most
desirable: non-Gaussian data with too little reliable moment information for
standard cumulant-based calibration.

Kunchenko's monographs outline three applied branches of the apparatus:
statistical parameter estimation, hypothesis testing, and pattern recognition
~\citep{kunchenko2002,kunchenko2006stochastic}. The closest neighboring
Western traditions are GMM~\citep{hansen1982gmm}, Huber's M-estimators
~\citep{huber1981robust}, Hosking's L-moments~\citep{hosking1990lmoments},
and SLS~\citep{wang2008sls}.

\subsection{Fractional calculus and fractional-power bases}\label{sec:intro:fractional}

An alternative is offered by the apparatus of \emph{fractional calculus},
which generalizes the classical notions of differentiation and integration to
non-integer orders. In the
statistical context, the key concept is that of \emph{fractional absolute
moments}: expectations of non-integer powers of the centered absolute
residual. Such moments can remain finite for distributions whose integer
moments of order two or higher are absent. For Cauchy, for example,
fractional moments exist only below order one; for stable laws, the admissible
orders are bounded by the stability index. This is the reason fractional
lower-order moment methods are useful in heavy-tail signal
processing~\citep{matsui2013fracmoments,nolan2020stable}.

A natural continuation of this idea is the construction of
\emph{fractional-power bases} in the stochastic-polynomial construction:
instead of only integer powers, one uses sign-preserving fractional powers of
the residual.
When $p_i = i$ we obtain a signed-parity integer-power basis: it coincides
with the classical power basis for odd powers, while even powers remain in
the sign-preserving Form-B convention. When $p_i = 1/i$ we obtain the
\emph{fractal} basis, which operates with fractional absolute moments. Its
applicability is still governed by
the specific moment orders required by the estimator; extremely heavy-tailed
laws such as Cauchy are therefore limiting cases for the $S=2$ construction.
The conceptual prerequisites for such an extension
come from two established lines: fractional lower-order moments in
heavy-tailed signal processing~\citep{shao1993signal,matsui2013fracmoments}
and Kunchenko's generating-element approximation space
~\citep{kunchenko2005poligen}.

\subsection{The Parametrically Adaptive Transition Polynomial (PATP)}\label{sec:intro:patp}

The construction proposed in this paper --- the \emph{Parametrically Adaptive
Transition Polynomial} (PATP) --- singles out the fractional-power
basis as a \emph{continuously parametrized} family within Kunchenko's
apparatus. We introduce a control parameter $\alpha \in [0, 1]$ and use it
to move continuously between three structural regimes of the basis. The
formal exponent map and the exact basis definition are given later in
\S\ref{sec:patp:param}; at the level of the introduction, the essential point
is that the chosen quadratic parameterization realizes the following three
anchor states:
\begin{itemize}
  \item $\alpha = 0$: $p_i = 1/i$ --- the \emph{fractal} regime, suitable
        for heavy tails;
  \item $\alpha = \tfrac{1}{2}$: $p_i = 1$ for all $i$ --- the degenerate
        \emph{linear} regime;
  \item $\alpha = 1$: $p_i = i$ --- the signed-parity
        \emph{integer-power} regime related to the classical
        power-polynomial basis (\citealp{kunchenko2002}, ch.~4).
\end{itemize}
This paper uses that idea to ``smoothly vary the type of polynomial basis''
according to the estimated shape of the tails of the residual distribution.
We formalize this construction within Kunchenko's apparatus, derive a
closed-form formula for the variance reduction coefficient $g_2(\alpha)$ for
$S = 2$, identify the degeneracy at $\alpha = 1/2$, and examine the behavior
of the estimator on canonical distributions.

PATP does not replace the classical apparatus. In the Form-B convention used
here, $\alpha = 1$ gives a signed-parity analogue of the integer-power basis
rather than an exact recovery of the canonical even-power PMM basis.
Symmetrically, the fractal regime $\alpha = 0$ reduces the required integer
moment order, but it does not cover all heavy-tailed laws automatically:
Cauchy remains outside the $S=2$ OLS-referenced variance formula because the
second moment and the required $\nu_{3/2}$ moment are infinite. Between
these two poles, PATP provides a continuous ``dial'' for basis selection,
optimized to the estimated non-Gaussianity profile of the noise when the
stated moment conditions hold.

\subsection{Contributions and structure}\label{sec:intro:contributions}

The contributions of this paper are:

\begin{enumerate}
  \item \textbf{Formal definition of PATP} as a continuously
        $\alpha$-parametrized family of basis functions within the
        generalized apparatus of Kunchenko stochastic polynomials
        (\S\ref{sec:patp:family}--\S\ref{sec:patp:criterion}); the structural
        degeneration at $\alpha = \tfrac{1}{2}$ is identified, and
        non-degeneracy conditions for the polynomial body
        $\Delta_S(\alpha)$ are stated.

  \item \textbf{Closed-form formula $g_2(\alpha; \nu_p, \gamma_3)$} for
        the case $S = 2$ (\S\ref{sec:eff:s2}), expressed through fractional
        absolute moments $\nu_p = \E[|\xi|^p]$. At $\alpha = 1$ the formula
        yields the Form-B signed-parity analogue of the classical PMM2
        expression rather than the canonical even-power result; at
        $\alpha = 0$ it is expressed through fractional moments including
        $\nu_{3/2}$ and can remain meaningful for finite-variance laws even
        when the fourth moment and $\gamma_4$ are unavailable.

  \item \textbf{Degeneracy and numerical shape analysis} of $g_2(\alpha)$
        (\S\ref{sec:eff:unimodal}); the structural point
        $\alpha = \tfrac{1}{2}$ is identified as singular for the
        $\mathbf{F}_2$ matrix, while typical curve shapes are examined
        numerically rather than claimed as a general theorem.

  \item \textbf{Implementation algorithms} (\S\ref{sec:algorithms}):
        a protected fixed-$\alpha$ solver stack and several calibration
        variants for selecting $\alpha^*$ from fractional-moment and robust
        shape diagnostics.

  \item \textbf{Reproducible validation and a real-data application}
        (\S\ref{sec:illustrations}): on symmetric canonical laws (Laplace,
        Generalized Gaussian, Student-$t$; Cauchy as a limiting case) the full
        $\mathbf{F}_2^{-1}\vec b$ estimator's Monte Carlo $g_2(\alpha)$ converges
        to the closed-form prediction, the same factor is confirmed for
        regression coefficients, and a real-data location example (daily index
        log-returns) attains a $13\%$ variance reduction over the sample mean,
        competitive with classical robust baselines.

  \item \textbf{Conceptual placement} of PATP among neighboring
        semiparametric frameworks --- GMM, M-estimators, L-moments,
        SLS (\S\ref{sec:disc:related}) --- and an honest account of
        limitations (\S\ref{sec:disc:robust}).
\end{enumerate}

Empirical cross-domain validation and the applied implementation of the
$\alpha = 0$ fractal case in Royston-Altman fractional polynomial
regression~\citep{royston1994} are left to separate work.

The structure of the remainder of the paper is as follows. In
\S\ref{sec:background} we recall Kunchenko's generalized apparatus, derive
the closed-form formula $g_2$ for the power basis, and introduce fractional
absolute moments. In \S\ref{sec:patp} we formally define PATP. In
\S\ref{sec:efficiency} we develop the theoretical efficiency analysis,
including closed-form formulas for $S = 2$. In \S\ref{sec:algorithms} we
describe the algorithms. In \S\ref{sec:illustrations} we present
illustrations. \S\ref{sec:discussion} discusses connections with neighboring
frameworks, and \S\ref{sec:conclusion} summarizes the paper.

\section{Background}\label{sec:background}

This section unfolds the three layers of apparatus on which the PATP
construction is built, and positions it in the context of modern methods of
robust statistics and signal processing.
\S\ref{sec:bg:literature} provides a literature review on the use of
fractional polynomials, semiparametric methods, and estimation approaches
under heavy-tailed distributions.
\S\ref{sec:bg:generalized} recalls the definition of the \emph{generalized}
Kunchenko stochastic polynomial and the maximum property on which PMM is
based. \S\ref{sec:bg:integer} specializes the apparatus to the classical
power basis and derives the closed-form formula
$g_2 = 1 - \gamma_3^2/(2+\gamma_4)$ for $S=2$ via the system of normal
equations. \S\ref{sec:bg:fracmoments} introduces fractional absolute moments
$\nu_p = \E[|\xi|^p]$ and fixes the conceptual place of the fractional-power
basis in Kunchenko's approximation space with a generating
element~\citep{kunchenko2005poligen}. All notation used in~\S\ref{sec:patp}
is introduced here for the first time.

\subsection{Literature review: fractional polynomials, heavy tails, and semiparametric methods}\label{sec:bg:literature}

The modern development of statistical estimation methods demonstrates a
sustained departure from the paradigm of exclusively classical polynomial and
Gaussian approximations toward more flexible and robust semiparametric
constructions. In this context, the Parametrically Adaptive Transition
Polynomial (PATP) inherits and generalizes ideas from several adjacent
fields simultaneously: biostatistics (fractional polynomials), robust signal
processing (lower-order statistics), and heavy-tail theory (stable laws).

\paragraph{Fractional polynomials and flexible modeling.}
The methodology of using non-integer (fractional) or negative powers in
regression analysis was fundamentally established in the work of Royston and
Altman~\citep{royston1994}. This paradigm proved its superiority over
classical polynomial approximation (which often suffers from the Runge
phenomenon and uncontrolled oscillations at the boundaries of the sample),
since it provides a wider spectrum of functional forms with fewer parameters
(parsimony parameter model). The modern stage of development of this
apparatus is consolidated in the monograph on mixed multivariable
modeling~\citep{royston2008multivariable, sauerbrei1999building}, where
fractional polynomials (FP) serve as the standard for finding nonlinear
relationships of unknown form. The advantages of fractional polynomials in
the context of robustness of estimates (in particular, dosing and risk
analysis) were vividly demonstrated by Faes et al.~\citep{faes2003fractional},
who showed their ability to minimize the influence of model specification
errors. Direct adaptation of the fractional-power trend
($p_i \in \{0.5, -1, \dots\}$) motivates the PATP basis construction
developed in this paper.

\paragraph{Heavy-tailed distributions and robust signal processing.}
The limitations of the classical (power) method of moments and the method of
least squares (OLS) become most fatal in the case of heavy-tailed
distributions, where higher moments or even the variance may be infinite. The
canonical theory of such $\alpha$-stable non-Gaussian processes is described
in the monograph of Samorodnitsky and Taqqu~\citep{samorodnitsky1994stable}.
In practice, this problem is most frequently encountered in radar, telecommunications,
and financial market analysis~\citep{nolan2020stable, matsui2013fracmoments}.
To overcome the divergence of higher-order moments, modern signal processing
methods use \emph{fractional lower order moments} (FLOM) --- moments of
fractional powers lower than the stability order. Thus, the classical work of
Shao and Nikias~\citep{shao1993signal} directly justified the use of
fractional absolute moments for the synthesis of optimal filters and receivers
in environments with $\alpha$-stable impulsive noise, which conceptually is
the primary basis of the fractal (limiting at $\alpha = 0$) regime of the
PATP construction. The general principles of robustness under such
interference are consolidated in modern handbooks on robust signal
processing~\citep{zoubir2012robust}.

\paragraph{Semiparametric procedures and L-moments.}
The development of PATP should also be understood as an evolution of
estimation approaches under limited prior information without assumptions
about the availability of the full distribution function. These classical
ideas, laid down in Hansen's generalized method of moments
(GMM)~\citep{hansen1982gmm} and Huber's robust statistics~\citep{huber1981robust},
find further continuation in Hosking's concept of
L-moments~\citep{hosking1990lmoments}. Separately for tails, Trimmed
L-moments were created~\citep{elamir2003trimmed}, which stabilize estimates
by trimming extreme order statistics. The introduction of $p_i(\alpha)$ in
PATP (\S\ref{sec:patp:param}) plays an analogous role, but achieves
robustness not through threshold rejection (or trimming), but through
continuous smoothing of the nonlinearity of the basis to a ``safe'' range of
fractional exponents.

\paragraph{Modern robust mean estimation.}
Recent heavy-tail theory also includes estimators whose primary objective is
non-asymptotic concentration under weak moment assumptions. Catoni-type
estimators~\citep{catoni2012challenging}, median-of-means and geometric-median
aggregation~\citep{minsker2015geometric}, and the broader survey of
heavy-tailed mean and regression methods by Lugosi and
Mendelson~\citep{lugosi2019survey} provide the closest modern robust
benchmark family. PATP differs in emphasis: it is not a distribution-free
concentration method, but a Kunchenko-style fractional-moment construction
with an explicit efficiency coefficient when the required signed and absolute
moments exist.

Thus, PATP combines the flexibility of Royston-Altman FP modeling with the
FLOM ideas of heavy-tail-robust signal processing, embedding them in the
rigorous mathematical apparatus of Kunchenko stochastic polynomial
maximization.

\subsection{The generalized Kunchenko stochastic polynomial}\label{sec:bg:generalized}

\paragraph{Setup.}
Let $\xi$~be an observed random variable whose distribution depends on an
unknown scalar parameter $\theta \in \Theta \subset \mathbb{R}$.
Consider a set of basis functions $\{\varphi_i(\cdot)\}_{i=1}^{S}$,
linearly independent on the domain of admissible values of~$\xi$, with
properties
\begin{equation}\label{eq:psi-def}
  \Psi_{ij}(\theta) \;:=\; \E\bigl[\varphi_i(\xi)\,\varphi_j(\xi)\bigr],
  \qquad
  \Psi_i(\theta) \;:=\; \E\bigl[\varphi_i(\xi)\bigr],
\end{equation}
finite for $i, j = 1, \ldots, S$. As shown in Kunchenko's monographs
~\citep{kunchenko2005poligen,kunchenko2006stochastic}, known
subclasses of bases include:

\begin{itemize}
  \item \emph{power class} $\varphi_i(\xi) = \xi^i$, for which
        $\Psi_{ij} = \alpha_{i+j}$ and $\Psi_i = \alpha_i$ via the initial
        moments $\alpha_k = \E[\xi^k]$;
  \item \emph{trigonometric class} $\varphi_i(\xi) = \cos(ik\xi)$ or
        $\sin(ik\xi)$, expressed through the characteristic function;
  \item \emph{exponential class} $\varphi_i(\xi) = e^{ik\xi}$.
\end{itemize}
In the present paper we work only with the power class and its
fractional-power extension (\S\ref{sec:patp:family}); the trigonometric
and exponential subclasses remain out of scope.

\paragraph{Stochastic polynomial and the matrix of centered correlants.}
Kunchenko~\citep[ch.~3]{kunchenko2002} calls the random variable
\begin{equation}\label{eq:eta-S}
  \eta_S(\xi)
  \;=\; h_0 \;+\; \sum_{i=1}^{S} h_i\,\varphi_i(\xi),
  \qquad |h_i| < \infty,
\end{equation}
the \emph{generalized stochastic polynomial} of degree~$S$ in the basis
$\{\varphi_i\}$, where $h_0, \ldots, h_S$ are non-random coefficients.
The key characteristic of such a polynomial is the \emph{matrix of centered
correlants}
\begin{equation}\label{eq:F-matrix}
  F_{ij}(\theta)
  \;:=\; \Psi_{ij}(\theta) \;-\; \Psi_i(\theta)\,\Psi_j(\theta),
  \qquad i, j = 1, \ldots, S.
\end{equation}
The symmetric matrix $\mathbf{F}_S(\theta) = \bigl(F_{ij}(\theta)\bigr)$ is
a non-trivial generalization of the covariance matrix to higher moments; its
determinant $\Delta_S(\theta) = |\mathbf{F}_S(\theta)|$ is called the
\emph{body} of the stochastic polynomial by
Kunchenko~\citep{kunchenko2006stochastic}. Under linear independence of
$\{\varphi_i\}_{i=1}^{S}$, the matrix $\mathbf{F}_S$ is positive definite,
i.e., $\Delta_S > 0$, which guarantees the existence and uniqueness of the
solution to the estimation problem.

The variance of the polynomial~\eqref{eq:eta-S} itself is expressed by the
quadratic form
\begin{equation}\label{eq:var-eta}
  \Var\bigl[\eta_S(\xi)\bigr]
  \;=\; \sum_{i=1}^{S}\sum_{j=1}^{S} h_i\,h_j\,F_{ij}(\theta).
\end{equation}

\paragraph{Global maximum property and the PMM criterion.}
The method of polynomial maximization is based on the property proved by
Kunchenko~\citep[ch.~3.2]{kunchenko2002}: under certain regularity conditions
on the coefficients $h_i$, the expected value of the stochastic polynomial,
viewed as a function of the parameter~$\theta$, has a \emph{global maximum
in a neighborhood of the true value}~$\theta_0$:
\begin{equation}\label{eq:mmpl-property}
  \theta_0
  \;=\; \arg\max_{\theta \in \Theta}\;
        \E\bigl[\eta_S(\xi;\theta)\bigr].
\end{equation}
The PMM estimate of degree~$S$ from a sample $x_1, \ldots, x_N$ is defined
as the point of maximum of the \emph{empirical} analog
\begin{equation}\label{eq:LS-empirical}
  \hat\theta_{\PMM,S}
  \;=\; \arg\max_{\theta \in \Theta}\;
        L_S(\theta; x_1, \ldots, x_N),
  \qquad
  L_S(\theta; x_1, \ldots, x_N)
  \;=\; \sum_{i=1}^{S} h_i^{*}(\theta) \cdot
        \frac{1}{N}\sum_{n=1}^{N} \varphi_i(x_n),
\end{equation}
where $h_i^{*}(\theta)$ are the \emph{optimal} coefficients that minimize
the asymptotic variance of the estimate. The solution to this minimization
problem~\citep[ch.~3.3]{kunchenko2002} is the linear system
\begin{equation}\label{eq:normal-eq}
  \mathbf{F}_S(\theta) \cdot \vec h^{*}(\theta)
  \;=\; \vec b(\theta),
\end{equation}
where $\vec b(\theta) \in \mathbb{R}^{S}$ is the vector with components
$b_i(\theta) = \partial \Psi_i / \partial \theta$, which depend on the
derivatives of the basis characteristics with respect to the parameter.

\paragraph{Variance of the estimate and the variance reduction coefficient.}
Substituting $\vec h^{*}(\theta) = \mathbf{F}_S^{-1}(\theta)\,\vec b(\theta)$
into the functional~\eqref{eq:LS-empirical} and taking the second derivative
with respect to the parameter yields the canonical formula for the asymptotic
variance of the PMM estimate~\citep[ch.~3.4]{kunchenko2002}
\begin{equation}\label{eq:var-pmm-canonical}
  \Var\bigl[\hat\theta_{\PMM,S}\bigr]
  \;=\; \frac{1}{N \cdot
                \vec b^{\top}(\theta_0)\,
                \mathbf{F}_S^{-1}(\theta_0)\,
                \vec b(\theta_0)}.
\end{equation}
The ratio of this variance to the variance of the OLS estimate gives the
\emph{variance reduction coefficient}
\begin{equation}\label{eq:gS-def}
  g_S(\theta)
  \;:=\; \frac{\Var\!\bigl[\hat\theta_{\PMM,S}\bigr]}
              {\Var\!\bigl[\hat\theta_{\mathrm{OLS}}\bigr]},
  \qquad 0 < g_S(\theta) \leq 1,
\end{equation}
which is interpreted as the asymptotic relative efficiency (ARE) of PMM
against OLS. The value $g_S = 1$ corresponds to no gain (Gaussian case);
$g_S < 1$ is a strict gain of PMM in variance. As $S$ increases, the
coefficient $g_S$ decreases
monotonically~\citep[ch.~3.5]{kunchenko2002}, asymptotically approaching
the Cramér-Rao efficiency bound: as $S \to \infty$ the PMM estimate is
asymptotically equivalent to the maximum likelihood estimate, while not
requiring knowledge of the full distribution function.

\subsection[The power class and derivation of g2 for S = 2]{The power class and derivation of $g_2$ for $S = 2$}\label{sec:bg:integer}

Consider the most important special case: estimation of the mean
$\theta = \mu$ of a random variable $\xi$ with finite moments up to the
fourth order in the power basis $\varphi_i(\xi) = (\xi - \mu)^i$,
$i = 1, 2$.

\paragraph{Parametrization via central moments.}
We introduce notation for central moments and cumulants:
\begin{equation}\label{eq:moments-cumulants}
  m_k(\mu) := \E\bigl[(\xi - \mu)^k\bigr], \qquad
  c_2 = m_2, \quad c_3 = m_3, \quad c_4 = m_4 - 3 m_2^2,
\end{equation}
where $c_k$~is the $k$-th cumulant, and the relation $m_4 = c_4 + 3 c_2^2$
is classical. Standardized cumulants:
$\gamma_3 = c_3 / c_2^{3/2}$ (skewness coefficient) and
$\gamma_4 = c_4 / c_2^2$ (excess kurtosis).

\paragraph{Matrix $\mathbf{F}_2$ and vector $\vec b$.}
At the true value $\mu$ we have $\Psi_i(\mu) = \E[(\xi - \mu)^i] = m_i$,
and since $m_1 = 0$:
\begin{align}
  F_{11} &= \E[(\xi - \mu)^2] - \E[\xi - \mu]^2
          \;=\; m_2 \;=\; c_2,
          \label{eq:F11-int}\\
  F_{12} = F_{21} &= \E[(\xi - \mu)^3] - \E[\xi - \mu]\E[(\xi - \mu)^2]
          \;=\; m_3 \;=\; c_3,
          \label{eq:F12-int}\\
  F_{22} &= \E[(\xi - \mu)^4] - \E[(\xi - \mu)^2]^2
          \;=\; m_4 - m_2^2
          \;=\; c_4 + 2c_2^2.
          \label{eq:F22-int}
\end{align}
Hence
\begin{equation}\label{eq:F2-matrix}
  \mathbf{F}_2(\mu)
  \;=\;
  \begin{pmatrix}
    c_2 & c_3 \\
    c_3 & c_4 + 2c_2^2
  \end{pmatrix},
  \qquad
  \Delta_2 \;=\; c_2(c_4 + 2c_2^2) - c_3^2.
\end{equation}

The vector $\vec b$ is computed from $b_i = \partial \Psi_i / \partial \mu
= -i \cdot \E[(\xi-\mu)^{i-1}] = -i \cdot m_{i-1}$. With $m_0 = 1$ and
$m_1 = 0$:
\begin{equation}\label{eq:b2-vector}
  \vec b
  \;=\;
  \begin{pmatrix} -1 \\ 0 \end{pmatrix}.
\end{equation}
The sign $-1$ can be absorbed into a redefinition of the coefficients $h_i$
without affecting the variance of the estimate; hereafter we work with
$\vec b = (1, 0)^{\top}$.

\paragraph{Solution of the system and the coefficient $g_2$.}
Substituting~\eqref{eq:F2-matrix} and $\vec b = (1, 0)^{\top}$
into~\eqref{eq:var-pmm-canonical}:
\begin{equation}\label{eq:b-Finv-b}
  \vec b^{\top}\,\mathbf{F}_2^{-1}\,\vec b
  \;=\; \bigl(\mathbf{F}_2^{-1}\bigr)_{11}
  \;=\; \frac{F_{22}}{\Delta_2}
  \;=\; \frac{c_4 + 2c_2^2}{c_2(c_4 + 2c_2^2) - c_3^2}.
\end{equation}
Therefore,
\begin{equation}\label{eq:var-pmm2}
  \Var\bigl[\hat\mu_{\PMM,2}\bigr]
  \;=\; \frac{1}{N}\cdot\frac{c_2(c_4 + 2c_2^2) - c_3^2}{c_4 + 2c_2^2}
  \;=\; \frac{c_2}{N}
        \left[1 - \frac{c_3^2}{c_2(c_4 + 2c_2^2)}\right].
\end{equation}
Since $\Var[\hat\mu_{\mathrm{OLS}}] = c_2/N$, the variance reduction
coefficient takes the closed form
\begin{equation}\label{eq:g2-final}
  g_2 \;=\; 1 \;-\; \frac{c_3^2}{c_2(c_4 + 2c_2^2)}.
\end{equation}
Via the standardized cumulants $\gamma_3 = c_3 / c_2^{3/2}$ and
$\gamma_4 = c_4/c_2^2$ we have
$c_3^2/[c_2(c_4 + 2c_2^2)] = \gamma_3^2 / (\gamma_4 + 2)$, which gives
the canonical result of Kunchenko~\citep[formula~2.18]{kunchenko2002}:
\begin{equation}\label{eq:g2-standardized}
  \boxed{\,g_2 \;=\; 1 \,-\, \frac{\gamma_3^2}{2 + \gamma_4}.\,}
\end{equation}

\paragraph{Interpretation.}
Formula~\eqref{eq:g2-standardized} reveals two structural aspects of
PMM-2 efficiency.
First, the gain over OLS is \emph{proportional to the square of the
skewness} $\gamma_3^2$: for symmetric distributions ($\gamma_3 = 0$) there
is no gain. Second, the kurtosis $\gamma_4$ enters as a ``discount factor''
in the denominator: heavy tails (large $\gamma_4 > 0$) reduce the gain,
while light tails ($\gamma_4 < 0$, bounded by the condition
$\gamma_4 > -2$) amplify it. For the Gaussian case
$\gamma_3 = \gamma_4 = 0$, accordingly $g_2 = 1$ and PMM coincides with
OLS. The derivation for $S = 3$ from symmetric distributions analogously
gives $g_3 = 1 - \gamma_4^2/(6 + 9\gamma_4 + \gamma_6)$
\citep{zabolotnii2019symmetric}.

\subsection{Fractional calculus and fractional absolute moments}\label{sec:bg:fracmoments}

\paragraph{Limitations of the power basis.}
The derivation of~\eqref{eq:g2-standardized} relies on the finiteness of
moments $\alpha_k = \E[\xi^k]$ at least up to $k = 2S = 4$. In
\S\ref{sec:intro:kunchenko} we already noted that this condition is violated
for a wide class of heavy-tailed distributions: Cauchy ($\E[|\xi|] = \infty$),
Lévy stable with $\alpha_s < 2$ ($\Var[\xi] = \infty$), log-normal with
large variance. In such cases, the very construction of the matrix
$\mathbf{F}_2$ via~\eqref{eq:F11-int}--\eqref{eq:F22-int} becomes formally
undefined.

\paragraph{Fractional absolute moments.}
A natural candidate replacement for the integral moments $\alpha_k$ is the
\emph{fractional absolute moments}
\begin{equation}\label{eq:fracmoment-def}
  \nu_p \;:=\; \E\bigl[|\xi|^p\bigr], \qquad p \in (0, +\infty),
\end{equation}
which retain finiteness for a significantly wider class of distributions.
For Lévy stable laws with stability index $\alpha_s$, the moments
$\nu_p < \infty$ everywhere that $p < \alpha_s$~\citep{nolan2020stable};
for the Cauchy distribution ($\alpha_s = 1$) we have $\nu_p < \infty$ for
$p < 1$. Fractional lower-order moment methods use this fact to build
estimators and filters in impulsive-noise settings
~\citep{shao1993signal,matsui2013fracmoments}. For distributions with finite
variance, $\nu_2 \equiv c_2$ and the ordinary second-moment description is
recovered.

\paragraph{Fractional-power basis functions.}
Fractional absolute moments~\eqref{eq:fracmoment-def} are the natural
numerical characteristics of \emph{fractional-power} basis functions
\begin{equation}\label{eq:gp-def}
  g_p^{+}(\xi) \;:=\; |\xi|^{p}, \qquad
  g_p^{-}(\xi) \;:=\; \mathrm{sign}(\xi) \cdot |\xi|^{p},
  \qquad p \in \mathbb{R}_{+},
\end{equation}
which generalize $\xi^i$ ($i \in \mathbb{N}$) to non-integer exponents
while preserving parity (sign symmetry in~$g_p^{-}$, symmetry in~$g_p^{+}$).
For centered~$\xi$ we have $\E[g_p^{+}(\xi)] = \nu_p$ and
$\E[g_p^{-}(\xi)] = \nu_p^{(s)}$, where
$\nu_p^{(s)} := \E[\mathrm{sign}(\xi)\,|\xi|^p]$ is the signed analog of
the fractional moment, which vanishes for symmetric distributions.

\paragraph{Kunchenko's generating element.}
The conceptual place of the fractional-power extension in Kunchenko's school
is determined through the apparatus of the \emph{space with a generating
element}~\citep{kunchenko2005poligen}. In this
formalism, the stochastic polynomial exists as an abstract object in a
functional space, independently of the specific class of basis functions.
The power, trigonometric, and exponential classes from
\S\ref{sec:bg:generalized} are different \emph{realizations} of this
abstract polynomial in different basis spaces. The fractional-power functions
$g_{p}^{\pm}$ from~\eqref{eq:gp-def} form yet another realization,
conceptually no less canonical than the power class, but with an extended
range of applicability in terms of the moment characteristics of the noise.

It is precisely this conceptual position --- the fractional-power class as an
equal realization of the generalized stochastic polynomial --- that serves as
the structural justification for introducing the PATP construction in
\S\ref{sec:patp}: the parametrization $\alpha \in [0, 1]$ provides a
\emph{continuous path} between two realizations (the power one at
$\alpha = 1$ and the fractal one at $\alpha = 0$) within a single apparatus.

\subsection{Contrexcess and the entropy coefficient: topographic classification}\label{sec:bg:entropy}

Classification of non-Gaussian distributions by the kurtosis coefficient
$\gamma_4$ alone is fundamentally incomplete: structurally different
distributions may exist with identically close values of~$\gamma_4$, yet
substantially different tail and central-region shapes. In particular, for
symmetric distributions ($\gamma_3 = 0$), where the skewness parameter is
uninformative, this problem is exacerbated: the pair (Laplace, $\gamma_4 = 3$)
and (Simpson-triangular, $\gamma_4 = -0.6$) in the current PATP calibration
$\alpha^{*}(\gamma_3, \gamma_4)$ appear as different classes, while their
optimal $\alpha^{*}$ may be close. This section introduces a second
independent shape parameter --- the \emph{entropy coefficient}~$k$ --- and
the corresponding \emph{topographic classification} on the plane
$(\varkappa, k)$, historically proposed in the metrological school of
P.\,V.~Novitsky and Z.\,L.~Zograf~\citep{novitsky1991errors}.

\paragraph{Differential entropy and the entropic error value.}
The \emph{Shannon differential entropy} of an absolutely continuous random
variable $\xi$ with density $f_{\xi}(\cdot)$:
\begin{equation}\label{eq:shannon-entropy}
  H(\xi) \;:=\; -\int_{\mathbb{R}} f_{\xi}(x)\,\ln f_{\xi}(x)\,dx,
\end{equation}
defined under the condition of absolute integrability of the function
$f \ln f$. In the metrological tradition of Novitsky it is more convenient
to work with the \emph{entropic error value}~\citep[ch.~3.4]{novitsky1991errors}:
\begin{equation}\label{eq:entropic-error}
  \Delta_{\text{e}}(\xi) \;:=\; \frac{1}{2}\,e^{H(\xi)}.
\end{equation}
The interpretation of~$\Delta_{\text{e}}$ is the \emph{half-width} of the
uniform distribution that has the same differential entropy as~$\xi$ (i.e.,
equivalent ``uncertainty'' by the Shannon criterion).

\paragraph{Entropy coefficient.}
The \emph{entropy coefficient} of distribution~$\xi$ is defined as the ratio
\begin{equation}\label{eq:entropy-coeff}
  k(\xi)
  \;:=\; \frac{\Delta_{\text{e}}(\xi)}{\sigma(\xi)}
  \;=\; \frac{e^{H(\xi)}}{2\,\sigma(\xi)},
\end{equation}
where $\sigma(\xi) = \sqrt{\Var[\xi]}$. This is a dimensionless quantity
(invariant to scale and shift), completely characterized by the shape of the
distribution law. The alternative form of~\eqref{eq:entropy-coeff}, expressed
directly through the Shannon entropy, is convenient for theoretical analysis;
the form through $\Delta_{\text{e}}$ emphasizes the metrological
interpretation of~$k$ as the ``entropic width'' normalized to the standard
deviation.

\paragraph{Properties of $k$.}
\begin{itemize}
  \item \emph{Scale and shift invariance:}
        $k(c\xi + b) = k(\xi)$ for all $c \neq 0$, $b \in \mathbb{R}$
        (via the properties of entropy: $H(c\xi+b) = H(\xi) + \ln|c|$ and
        $\sigma(c\xi+b) = |c|\sigma(\xi)$).
  \item \emph{Positivity:} $k(\xi) > 0$ for all distributions with finite
        entropy and finite variance.
  \item \emph{Upper bound:} $k(\xi) \leq k_{\max} = \sqrt{2\pi e}/2 \approx 2.0663$,
        with equality only for the Gaussian distribution. This is a
        consequence of the \emph{maximum entropy
        principle}~\citep[ch.~12.4]{coverthomas2006}: among all distributions
        with fixed variance, the Gaussian maximizes entropy.
  \item \emph{Lower bound:} $k(\xi) > 0$ strictly (formally there is no
        universal lower bound, but for absolutely continuous distributions
        with bounded density $\sup f < \infty$ we have
        $k \gtrsim 1/(2\sigma \sup f)$).
\end{itemize}

\paragraph{Table of $k$ for canonical distributions.}
\begin{center}
\begin{tabular}{l|c|c|c}
Distribution & $\gamma_4$ & $\varkappa = 1/\sqrt{\gamma_4 + 3}$ & $k$ \\
\hline
Gaussian            & $0$      & $0.577$ & $2.0663$ \\
Laplace             & $3$      & $0.408$ & $1.9300$ \\
Simpson (triangular)& $-0.6$   & $0.645$ & $2.0240$ \\
Uniform             & $-1.2$   & $0.745$ & $1.7321$ \\
Arcsine             & $-1.5$   & $0.816$ & $1.1107$ \\
Cauchy              & $\infty$ & $0$     & undefined$^{*}$  \\
\end{tabular}
\end{center}
The asterisk ($^{*}$) marks the degenerate case: for Cauchy the entropy is
finite ($H = \ln(4\pi)$), but the variance is infinite, so $k$ is not a
well-defined coefficient. It is kept in the table only as a boundary marker
for regimes outside the finite-variance topographic plane.

\paragraph{Topographic plane $(\varkappa, k)$.}
The \emph{contrexcess} of a symmetric distribution is defined as
\begin{equation}\label{eq:contrexcess}
  \varkappa \;:=\; \frac{1}{\sqrt{\gamma_4 + 3}}
            \;=\; \frac{\sigma^2}{\sqrt{m_4}},
\end{equation}
which normalizes the kurtosis to the interval $\varkappa \in (0, 1]$ (under
the condition $\gamma_4 \geq -2$; for $\gamma_4 = -2$ we have $\varkappa = 1$
as the degenerate two-point distribution; for $\gamma_4 \to \infty$~--- we
have $\varkappa \to 0$, heavy tails).

The \emph{topographic plane} of the Novitsky-Zograf school is the coordinate
plane $(\varkappa, k)$, on which each symmetric distribution is represented
by a unique point, and each parametric family by a
curve~\citep{novitsky1991errors}. An example of such a family is the
Generalized Gaussian Distribution~$\mathrm{GG}(\beta)$: at $\beta = 1$ we
have Laplace, at $\beta = 2$~--- Gaussian, at $\beta \to \infty$~---
Uniform. On the plane $(\varkappa, k)$ this family forms a continuous curve
passing through the three corresponding points of the table.
Figure~\ref{fig:topographic} visualizes the canonical distributions and the
GG family.

\begin{figure}[!htb]
  \centering
  \includegraphics[width=0.85\linewidth]{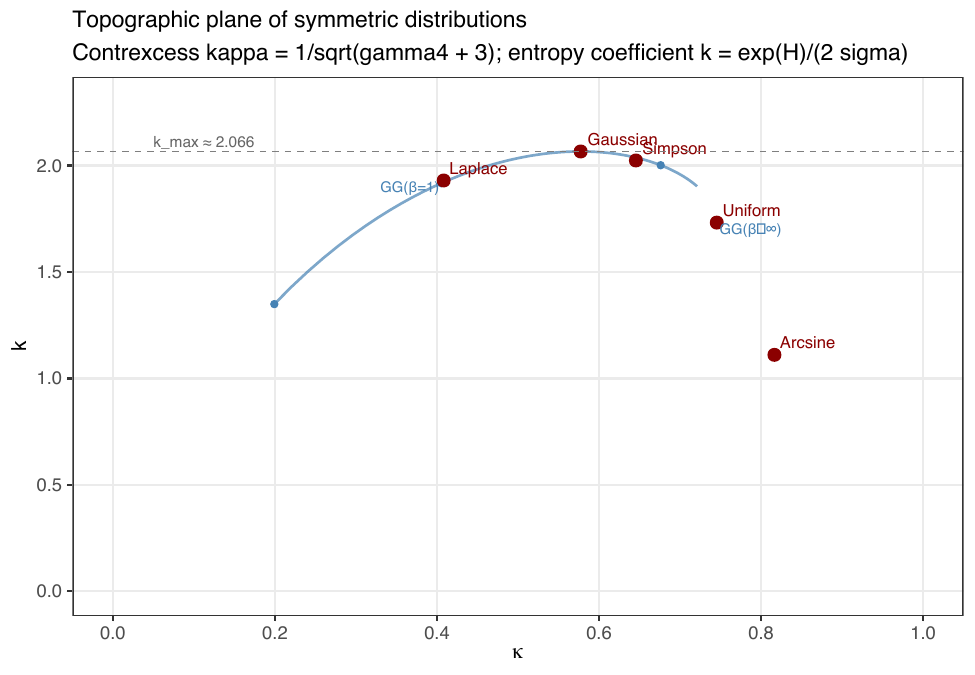}
  \caption{Topographic plane $(\varkappa, k)$ of symmetric distributions.
    Red dots~--- canonical distributions from the table; blue curve~---
    Generalized Gaussian family $\mathrm{GG}(\beta)$, passing through
    Laplace ($\beta = 1$), Gaussian ($\beta = 2$), and Uniform
    ($\beta \to \infty$). Dashed line~--- maximum possible value
    $k_{\max} = \sqrt{2\pi e}/2 \approx 2.066$ (Gaussian regime). Note:
    for the pair (Laplace, $\gamma_4 = 3$) and (Simpson, $\gamma_4 = -0.6$),
    projection onto kurtosis alone gives different points, while projection
    onto $k$~--- gives close values ($k \approx 1.93$ vs $2.02$), which
    motivates the use of $k$ as a second shape parameter.}
  \label{fig:topographic}
\end{figure}

\paragraph{Significance for PATP calibration.}
The topographic classification has methodological significance for PATP in
the finite-variance setting: two distributions with the same $\gamma_4$ but
different $k$ may be located in different \emph{optimality regions} of
PATP-$\alpha^{*}$. For practical applications (\S\ref{sec:algorithms}) this
means that calibrating $\alpha^{*}$ as a function of
$(\gamma_3, \gamma_4, k)$ may improve the estimate of the optimal $\alpha$
relative to the two-parameter $\alpha^{*}(\gamma_3, \gamma_4)$, especially in
the symmetric regime $\gamma_3 = 0$. For infinite-variance laws, $k$ must be
replaced by quantile, L-moment, or tail-index diagnostics. A numerical
investigation of the finite-variance advantage of $k$ is discussed in
\S\ref{sec:alg:calibration} and~\S\ref{sec:illustrations}.

\section{Parametrically Adaptive Transition Polynomial: Formal Definition}\label{sec:patp}

This section formalizes the PATP construction as a continuously
$\alpha$-parameterized family of basis functions within the generalized
Kunchenko apparatus (\S\ref{sec:bg:generalized}). \S\ref{sec:patp:family}
defines the basis family $\{\varphi_i(\,\cdot\,;\alpha)\}$;
\S\ref{sec:patp:param} derives the quadratic parameterization $p_i(\alpha)$
from three canonical constraints; \S\ref{sec:patp:limits} analyzes the three
limiting cases; \S\ref{sec:patp:body} establishes the conditions for linear
independence and non-degeneracy of the polynomial body $\Delta_S(\alpha)$;
\S\ref{sec:patp:criterion} formulates the PATP optimality criterion in the
form of the normal equations
$\mathbf{F}_S(\alpha)\,\vec h^{*}(\alpha) = \vec b(\alpha)$, which depend
on the control parameter~$\alpha$.

\subsection{Sign-preserving basis family}\label{sec:patp:family}

Fix a centered random variable $\xi$ (i.e., $\E[\xi] = 0$ at the true
parameter value). Introduce the control parameter $\alpha \in [0, 1]$ and
the exponent function $p_i(\alpha): [0,1] \to \mathbb{R}_{+}$ (its specific
form is fixed in \S\ref{sec:patp:param}). The basis functions of the
PATP family are defined as follows:
\begin{equation}\label{eq:patp-basis-def}
  \varphi_1(\xi) \;\equiv\; \xi,
  \qquad
  \varphi_i(\xi; \alpha)
  \;:=\; \mathrm{sign}(\xi) \cdot |\xi|^{\,p_i(\alpha)},
  \qquad i = 2, 3, \ldots, S.
\end{equation}
The construction \eqref{eq:patp-basis-def} is a special case of the
sign-preserving fractional-power functions $g_p^{-}$ from~\eqref{eq:gp-def}.
The first function~$\varphi_1$ is fixed as the identity (the same linear
step as in the classical power basis); the continuous parameter~$\alpha$
affects only the exponents for $i \geq 2$. Figure~\ref{fig:phi2_basis}
visualizes $\varphi_2(\xi; \alpha)$ for $\alpha \in \{0, 0.25, 0.5,
0.75, 1\}$~--- one can see the odd symmetry of the basis, the fractal
smoothing at $\alpha = 0$, and the collapse to a linear function at
$\alpha = 1/2$.

\begin{figure}[!htb]
  \centering
  \includegraphics[width=0.85\linewidth]{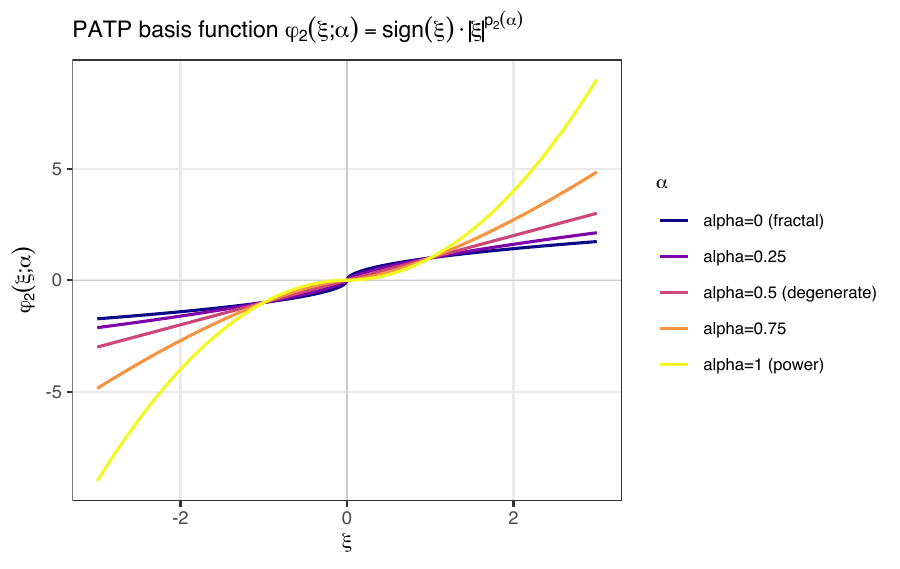}
  \caption{PATP basis function $\varphi_2(\xi; \alpha) = \mathrm{sign}(\xi)
    \cdot |\xi|^{p_2(\alpha)}$ for five values of $\alpha$. At $\alpha = 0$
    (fractal, $p_2 = 1/2$)~--- sublinear growth; at $\alpha = 1/2$
    (degenerate)~--- identity $\varphi_2 = \xi$; at $\alpha = 1$
    (signed-poly, $p_2 = 2$)~--- superlinear growth with \emph{odd}
    symmetry (in contrast to the canonical $\xi^2$, which is even).}
  \label{fig:phi2_basis}
\end{figure}

\paragraph{Why signed-parity (Form~B) and not Form~A.}
An earlier exploratory version of PATP applied the basis to the
\emph{observation} $y = R(\theta, x) + \xi$ as a whole (the so-called
Form~A, y-basis). This form is convenient for nonlinear regression since it
avoids explicit extraction of the residual, but has two structural
drawbacks. First, the matrix of centered correlants $\mathbf{F}_{ij}^{(v)}$
depends on the observation~$v$ through $R(\theta, x_v)$, which complicates
theoretical analysis. Second, for integer exponents Form~A theoretically
coincides with the power class of Kunchenko (\emph{universality property}),
but for fractional exponents this equivalence disappears and the gain from
the fractional extension is nullified.

\emph{Form~B} (residual-basis), corresponding to
\citep[ch.~4]{kunchenko2002}, works with
$\xi$~directly. The matrix $\mathbf{F}_S(\alpha)$ is \emph{scalar} (does
not depend on~$v$), and the derivation of $g_S(\alpha)$ in~\S\ref{sec:efficiency}
leads to a closed-form formula via the fractional absolute moments~$\nu_p$
from~\eqref{eq:fracmoment-def}. Therefore Form~B is chosen as the foundation
of the PATP apparatus in the present paper; Form~A remains suitable for
specific problems where $R(\theta, x)$ is linear and explicit (in
particular, for Royston-Altman fractional-polynomial regression settings
~\citep{royston1994}).

\subsection{Polynomial parameterization of exponents}\label{sec:patp:param}

\paragraph{Canonical constraints.}
Constructively we require that the parameterization $p_i(\alpha)$
reproduce three structural points simultaneously for all $i \geq 2$:
\begin{equation}\label{eq:patp-constraints}
  p_i(0) = \frac{1}{i}, \qquad
  p_i\!\left(\tfrac{1}{2}\right) = 1, \qquad
  p_i(1) = i.
\end{equation}
The limiting cases $\alpha = 0$ and $\alpha = 1$ correspond to the
\emph{fractal} ($p_i = 1/i$) and signed-parity \emph{integer-power}
($p_i = i$) regimes respectively. The midpoint
$\alpha = \tfrac{1}{2}$ defines the \emph{degenerate linear} regime
$p_i = 1$ for all $i$, in which the basis functions
\eqref{eq:patp-basis-def} collapse to a single linear function.
This is not an arbitrary choice: the point $\alpha = \tfrac{1}{2}$ serves
as the \emph{threshold} between the fractal and signed-parity integer-power
regimes and plays a structural role in the degeneracy analysis of
$g_2(\alpha)$
(\S\ref{sec:eff:unimodal}).

\paragraph{Lagrange interpolation.}
The three conditions~\eqref{eq:patp-constraints} uniquely determine a
quadratic polynomial in~$\alpha$. Applying the standard Lagrange
interpolation formula at the nodes $\{0, \tfrac{1}{2}, 1\}$ with values
$\{1/i, 1, i\}$:
\begin{align}
  p_i(\alpha)
  &= \frac{1}{i}\cdot
     \frac{(\alpha - \tfrac{1}{2})(\alpha - 1)}{(0 - \tfrac{1}{2})(0 - 1)}
   + 1 \cdot
     \frac{(\alpha - 0)(\alpha - 1)}{(\tfrac{1}{2} - 0)(\tfrac{1}{2} - 1)}
   + i \cdot
     \frac{(\alpha - 0)(\alpha - \tfrac{1}{2})}{(1 - 0)(1 - \tfrac{1}{2})}
   \notag\\[1ex]
  &= \frac{2}{i}\,(\alpha - \tfrac{1}{2})(\alpha - 1)
     \;-\; 4\,\alpha(\alpha - 1)
     \;+\; 2i\,\alpha(\alpha - \tfrac{1}{2}).
   \label{eq:p-i-lagrange}
\end{align}
Expanding the brackets and grouping by powers of~$\alpha$ yields the
canonical form of the PATP parameterization:
\begin{equation}\label{eq:p-i-final}
  \boxed{\,
  p_i(\alpha)
  \;=\; \frac{1}{i}
        \;+\; \left(4 - i - \frac{3}{i}\right)\alpha
        \;+\; \left(2i - 4 + \frac{2}{i}\right)\alpha^{2}.
  \,}
\end{equation}
Direct substitution of $\alpha \in \{0, \tfrac{1}{2}, 1\}$ verifies the
conditions~\eqref{eq:patp-constraints}. The simpler linear bridge
$p_i(\alpha) = (1 - \alpha)/i + \alpha\cdot i$ agrees with
\eqref{eq:p-i-final} at $\alpha = 0$ and $\alpha = 1$, but the quadratic
form adds the degenerate point $\alpha = \tfrac{1}{2}$, which is needed for
the structural degeneracy analysis.

\paragraph{Connection with fractional-polynomial regression.}
The $\alpha = 0$ realization of~\eqref{eq:p-i-final} is naturally related to
Royston-Altman fractional-polynomial regression~\citep{royston1994}. In the
standard regression setting, exponents are chosen not continuously
via~$\alpha$, but from the finite set
$\mathcal{P} = \{-2, -1, -0.5, 0, 0.5, 1, 2, 3\}$, which corresponds to
a selective ``snapshot'' from the fractal regime~$\alpha = 0$.

\subsection{Three limiting cases}\label{sec:patp:limits}

We fix the constraints of the parameterization~\eqref{eq:p-i-final} at the
three structural points and briefly discuss their semantics. In
Figure~\ref{fig:p_i_alpha} the curves $p_i(\alpha)$ are shown for
$i \in \{2,3,4,5\}$, illustrating the quadratic continuous interpolation
between the fractal ($\alpha = 0$), degenerate ($\alpha = 1/2$), and
power-polynomial ($\alpha = 1$) regimes.

\begin{figure}[!htb]
  \centering
  \includegraphics[width=0.85\linewidth]{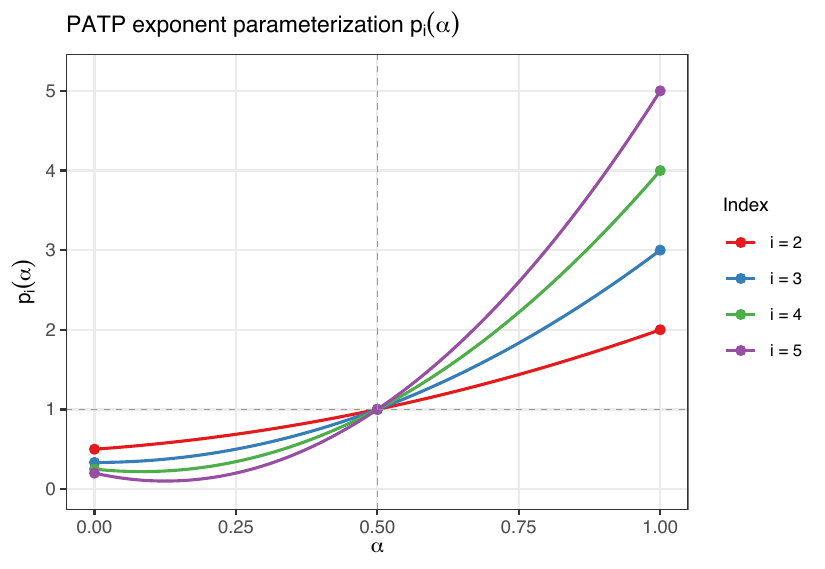}
  \caption{Quadratic curves $p_i(\alpha)$ for basis indices
    $i = 2, 3, 4, 5$. All curves pass through three structural points:
    $(0, 1/i)$ (fractal regime), $(1/2, 1)$ (degenerate point), and
    $(1, i)$ (classical Kunchenko power-polynomial regime). Dashed lines
    highlight the degenerate point $\alpha = 1/2$, where all $p_i$
    coincide with~1.}
  \label{fig:p_i_alpha}
\end{figure}

\paragraph{Case $\alpha = 0$: fractal regime.}
$p_i(0) = 1/i$, so the non-linear basis elements have the form
\[
  \varphi_i(\xi;0)=\mathrm{sign}(\xi)\,|\xi|^{1/i},
  \qquad i=2,\ldots,S.
\]
All basis functions grow \emph{slower} than linear, which makes the fractal
regime suitable for distributions with heavy tails: the low-order fractional
moments $\nu_{1/i}$ may exist even when the integer moments $\alpha_i$
diverge. This observation is not, by itself, sufficient for the $S=2$
variance formula, which still contains the baseline second moment $c_2$ and
higher fractional moments such as $\nu_{3/2}$; it therefore excludes the
Cauchy law. In this regime PATP is closely related to fractional lower-order
moment methods for heavy-tailed data~\citep{shao1993signal,matsui2013fracmoments}.

\paragraph{Case $\alpha = 1$: signed-parity integer-power regime.}
$p_i(1) = i$, so the non-linear basis elements have the form
\[
  \varphi_i(\xi;1)=\mathrm{sign}(\xi)\,|\xi|^i,
  \qquad i=2,\ldots,S.
\]
Through the sign-preserving construction~$\mathrm{sign}(\xi)\,|\xi|^i$ we
have an exact identity with the canonical powers $\xi^i$ for odd~$i$;
for even~$i$ the sign-preserving form gives $\mathrm{sign}(\xi)\,|\xi|^i$
instead of $\xi^i$. Although this difference appears formal at first glance,
its consequences for the matrix $\mathbf{F}_S$ are non-trivial (detailed
in \S\ref{sec:patp:body}). Consequently, the \emph{Form~B}-consistent
matrix~$\mathbf{F}_S$ at $\alpha=1$ should be treated as a signed-parity
integer-power analogue of the canonical PMM matrix, not as a literal
reproduction of the classical even-power construction.

\paragraph{Case $\alpha = \tfrac{1}{2}$: degenerate linear regime.}
$p_i(\tfrac{1}{2}) = 1$ for all $i \geq 2$, all basis functions collapse
to the linear one: $\varphi_i(\xi; \tfrac{1}{2}) = \xi$. The matrix
$\mathbf{F}_S(\tfrac{1}{2})$ is degenerate (rank 1), and the PMM estimator
reduces to OLS ($g_2(\tfrac{1}{2}) = 1$, no gain). This point serves as
a singular reference point for the function~$g_2(\alpha)$ and for the
numerical shape analysis in~\S\ref{sec:eff:unimodal}.

\subsection{Linear independence of the basis and the polynomial body}\label{sec:patp:body}

\paragraph{Non-degeneracy condition.}
For the PMM estimation problem to be well-posed (equation~\eqref{eq:normal-eq}),
the matrix $\mathbf{F}_S(\alpha)$ must be non-degenerate, i.e.,
$\Delta_S(\alpha) > 0$. This is equivalent to the \emph{linear independence}
of the basis functions $\{\varphi_i(\,\cdot\,;\alpha)\}_{i=1}^{S}$ on the
support of the distribution of~$\xi$.

\begin{proposition}\label{prop:linindep}
Let $\xi$~be a random variable with an absolutely continuous distribution
whose support includes a neighborhood of zero. Then the basis functions
$\{\varphi_i(\,\cdot\,;\alpha)\}_{i=1}^{S}$ from~\eqref{eq:patp-basis-def}
with $p_i(\alpha)$ given by \eqref{eq:p-i-final} are linearly independent
\emph{almost everywhere} if and only if all exponents $\{p_1, p_2(\alpha),
\ldots, p_S(\alpha)\}$ are distinct.
\end{proposition}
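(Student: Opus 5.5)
Both directions reduce to the functions $\mathrm{sign}(x)|x|^{p}$ on a one-sided interval near zero. Note first that $\varphi_1(\xi)=\xi=\mathrm{sign}(\xi)|\xi|^{1}$, so every basis function has the common form $g_{p_i}^{-}(\xi)$ from~\eqref{eq:gp-def}, with $p_1=1$ and $p_i=p_i(\alpha)$ from~\eqref{eq:p-i-final}. Linear independence ``almost everywhere'' means the following. Take any real coefficients $c_1,\dots,c_S$ with $\sum_i c_i\,g_{p_i}^{-}(x)=0$ for almost every $x$ in the support of $\xi$. Then all $c_i$ must vanish. Since $\xi$ is absolutely continuous and its support contains an interval $(-\varepsilon,\varepsilon)$, such a relation holds Lebesgue-a.e. on $(-\varepsilon,\varepsilon)$. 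By continuity of each $g_p^{-}$, it then holds everywhere on that interval. In particular it holds on $(0,\varepsilon)$, where $g_p^{-}(x)=x^{p}$.

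\emph{Necessity.} Suppose two exponents coincide, say $p_j=p_k$ with $j\neq k$. Then $\varphi_j\equiv\varphi_k$ identically on $\mathbb{R}$, and the choice $c_j=1$, $c_k=-1$, all other $c_i=0$, gives a nontrivial vanishing combination. This is exactly what happens at $\alpha=\tfrac12$, where all $p_i=1$ by~\eqref{eq:patp-constraints}. It can also happen at other values of $\alpha$ where two of the quadratics $p_i(\alpha)$, or $p_i(\alpha)$ and $p_1=1$, cross.

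\emph{Sufficiency.} Assume the exponents are distinct and order them $q_1<q_2<\dots<q_S$, all positive. Suppose $\sum_k c_k x^{q_k}=0$ on $(0,\varepsilon)$ and not all $c_k$ vanish. Let $m$ be the smallest index with $c_m\neq0$. Dividing by $x^{q_m}$ gives
\[
c_m+\sum_{k>m}c_k\,x^{q_k-q_m}=0 \quad\text{on } (0,\varepsilon).
\]
Every exponent $q_k-q_m$ with $k>m$ is strictly positive, so letting $x\to0^{+}$ forces $c_m=0$, a contradiction. An equivalent route substitutes $x=e^{t}$ and appeals to the linear independence of the exponentials $e^{q_k t}$ with distinct rates. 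The limit argument is more elementary and I would use it.

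The only delicate step is the reduction from ``almost everywhere on the support'' to ``identically on an interval''. It relies on the neighborhood-of-zero hypothesis together with the continuity of the basis functions. I would state this explicitly so that ``linearly independent'' is understood in $L^2(P_\xi)$, which is the sense relevant for $\Delta_S(\alpha)>0$ in~\eqref{eq:F-matrix}. Odd symmetry means the negative half-line adds nothing: one side already suffices.
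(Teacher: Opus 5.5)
Your proof follows the paper's skeleton. Necessity holds because coinciding exponents give identical functions. Sufficiency comes from restricting to the positive half-line and using independence of distinct powers. The sufficiency step itself is carried out differently.

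\begin{itemize}
\item \emph{Where the relation is used.} The paper restricts to all of $(0,+\infty)$. You localize to $(0,\varepsilon)$, which is exactly what the neighborhood-of-zero hypothesis supplies. A relation holding on the support says nothing about $(0,+\infty)$ outside it. So the paper's step tacitly needs either an analyticity argument or a Lebesgue reading of ``a.e.'', and under that reading the support hypothesis is idle.
\item \emph{How independence of powers is proved.} The paper cites the classical result via a Vandermonde-type determinant in the coordinate $t=\ln x$. You divide by the lowest power and let $x\to0^{+}$. Your argument is self-contained and valid for arbitrary distinct \emph{real} exponents.
\end{itemize}

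That generality matters, because your assertion that all exponents are positive is false in general. The minimum $\min_\alpha p_i(\alpha)=(8-(i-3)^2)/(8i)$ is attained at $\alpha=(i-3)/(4(i-1))\in(0,\tfrac14)$, and it is negative for $i\ge6$. For instance, $p_6(\alpha)<0$ exactly on $(\tfrac{1}{10},\tfrac15)$, with $p_6(\tfrac{3}{20})=-\tfrac{1}{48}$. So drop the positivity claim, and invoke continuity only on $(0,\varepsilon)$, where $x\mapsto x^{q}$ is continuous for every real $q$.

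Relatedly, it is continuity, not absolute continuity, that upgrades ``a.e.'' to ``everywhere''. The set where the combination is nonzero is open in $(0,\infty)$ and $P_\xi$-null. Hence it is disjoint from $\mathrm{supp}\,P_\xi\supset(0,\varepsilon)$. Your intermediate claim that a $P_\xi$-a.e.\ relation holds Lebesgue-a.e.\ on $(-\varepsilon,\varepsilon)$ does not follow from absolute continuity plus the support condition alone, since the density may vanish on a fat Cantor set.

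Finally, your aside that collisions may occur at other values of $\alpha$ is vacuous on $[0,1]$. By Theorem~\ref{thm:exponent-separation}, the only collision there is $\alpha=\tfrac12$. That theorem also covers collisions with $p_1$, because \eqref{eq:p-i-final} gives $p_1\equiv1$.
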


\begin{proof}[Proof (sketch)]
The converse is trivial: if $p_i(\alpha) = p_j(\alpha)$ for some
$i \neq j$, the corresponding basis functions coincide identically and
cannot be independent.

Direct statement. Suppose $\sum_{i=1}^{S} \lambda_i\,
\mathrm{sign}(\xi)|\xi|^{p_i} \equiv 0$ almost everywhere. Restricting to
the subset $\{\xi > 0\}$, we have $\sum \lambda_i |\xi|^{p_i} \equiv 0$ on
$(0, +\infty)$. By the classical result on the linear independence of
power functions (via the Vandermonde determinant in the corresponding
logarithmic coordinate), this is possible only if $\lambda_i = 0$ for
all~$i$, provided the exponents $p_i$ are distinct.
\end{proof}

\paragraph{Degeneracy at $\alpha = \tfrac{1}{2}$.}
By \eqref{eq:p-i-final} we have $p_i(\tfrac{1}{2}) = 1$ for all
$i \geq 2$, i.e., all exponents coincide with $p_1 \equiv 1$.
Proposition~\ref{prop:linindep} then fails, and
$\Delta_S(\tfrac{1}{2}) = 0$. This is the analytic manifestation of the
\emph{degenerate linear regime} (\S\ref{sec:patp:limits}): the PATP
estimator at $\alpha = \tfrac{1}{2}$ reduces to OLS.

\begin{theorem}[Exponent separation]\label{thm:exponent-separation}
For any distinct nonzero indices $i \neq j$, the equation
$p_i(\alpha)=p_j(\alpha)$ has the roots
\[
  \alpha=\frac{1}{2},
  \qquad
  \alpha=-\frac{1}{ij-1}.
\]
Consequently, on the admissible interval $[0,1]$ the only collision point of
distinct PATP exponents is $\alpha=\tfrac{1}{2}$.
\end{theorem}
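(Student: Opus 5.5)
The plan is to compute the difference $p_i(\alpha)-p_j(\alpha)$ directly from the canonical form \eqref{eq:p-i-final} and factor it. Since all three coefficients of \eqref{eq:p-i-final} are affine in the pair $(i,1/i)$, the difference depends only on $i-j$ and $1/i-1/j$. The identity $1/i-1/j=-(i-j)/(ij)$ lets me pull out the common factor $d:=i-j\neq 0$. Writing $q:=1/(ij)$, I expect
\[
  p_i(\alpha)-p_j(\alpha)
  \;=\; d\,\Bigl[\,2(1-q)\,\alpha^{2} + (3q-1)\,\alpha - q\,\Bigr].
\]
This comes from collecting the constant term $-dq$, the linear term $-d+3dq$, and the quadratic term $2d-2dq$.

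Next I use the fact, built into the construction \eqref{eq:patp-constraints}, that every $p_i$ equals $1$ at $\alpha=\tfrac12$. Hence $\alpha=\tfrac12$ must be a root of the bracket. A quick check confirms this: $\tfrac{1-q}{2}+\tfrac{3q-1}{2}-q=0$. Dividing out $(2\alpha-1)$ should then give the factorization
\[
  2(1-q)\alpha^{2}+(3q-1)\alpha-q \;=\; (2\alpha-1)\bigl((1-q)\alpha+q\bigr),
\]
which I verify by expanding. The second root is therefore $\alpha=-q/(1-q)=-1/(ij-1)$, as claimed. Since $d\neq0$, these are exactly the roots of $p_i=p_j$.

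The only delicate point is the degenerate case $1-q=0$, i.e.\ $ij=1$. Then the equation becomes linear and the second root disappears. For distinct positive indices, which is the setting of the basis \eqref{eq:patp-basis-def}, we have $ij\ge 2$, so this case cannot occur. I will state this assumption explicitly, since ``nonzero'' alone would admit pairs such as $(1,-1)$.

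I will also remark that \eqref{eq:p-i-final} with $i=1$ gives $p_1\equiv 1$, because both the linear and quadratic coefficients vanish. So the identity $\varphi_1(\xi)=\xi$ fits the same formula, and the pairs $(1,j)$ are covered by the same computation.

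For the consequence, $ij\ge2$ gives $-1/(ij-1)\in[-1,0)$, which lies outside $[0,1]$. Hence $\alpha=\tfrac12$ is the unique collision point on the admissible interval. Combined with Proposition~\ref{prop:linindep}, this shows that $\Delta_S(\alpha)>0$ for all $\alpha\in[0,1]\setminus\{\tfrac12\}$. No step is a real obstacle: the main risk is an algebra slip in the coefficient collection, which the check at $\alpha=\tfrac12$ guards against.
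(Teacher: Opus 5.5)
Your proof is correct and follows the paper's route: subtract the two quadratic bridges, pull out the index factor, factor off $(2\alpha-1)$, and use $ij>1$ to make the second root negative (your $d[\,\cdot\,]$ with $q=1/(ij)$ is the paper's $\frac{j-i}{ij}\bigl[1+(ij-3)\alpha-(2ij-2)\alpha^2\bigr]$ with the bracket rescaled). You are right that the indices must be positive, and the paper's proof also silently uses $i,j\ge1$, but $(1,-1)$ is the wrong witness: it gives a double root at $\alpha=\tfrac12$, whereas $(2,-1)$ puts the second root at $\alpha=\tfrac13\in[0,1]$; separately, your closing claim $\Delta_S(\alpha)>0$ also needs the moment and support hypotheses of Theorem~\ref{thm:positive-definite}.
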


\begin{proof}[Proof sketch]
Subtracting the two quadratic bridges and collecting terms gives
\[
  p_i(\alpha)-p_j(\alpha)
  =
  \frac{j-i}{ij}
  \left[
    1+(ij-3)\alpha-(2ij-2)\alpha^2
  \right].
\]
The bracketed quadratic factors as
\[
  -(2ij-2)
  \left(\alpha-\frac{1}{2}\right)
  \left(\alpha+\frac{1}{ij-1}\right).
\]
Since $i,j\geq 1$ and $i\neq j$, one has $ij>1$, so the second root is
negative and therefore lies outside $[0,1]$.
\end{proof}

\paragraph{No additional exponent-collision points on $[0,1]$.}
Theorem~\ref{thm:exponent-separation} replaces the earlier informal
``isolated degeneracies'' statement: the current PATP parametrization has no
extra interior exponent-collision points inside the admissible interval.

\paragraph{Polynomial body.}
Analogously to the classical definition~\citep{kunchenko2006stochastic} we
define the \emph{body of the PATP polynomial} as
\begin{equation}\label{eq:body-patp}
  \Delta_S(\theta; \alpha)
  \;:=\; \det\,\mathbf{F}_S(\theta; \alpha).
\end{equation}
The determinant is positive only under explicit moment and support
conditions. The following statement records the regular case used in the
rest of the paper.

\begin{theorem}[Positive definiteness of $\mathbf{F}_S(\alpha)$]\label{thm:positive-definite}
Let the distribution of $\xi$ have non-degenerate support containing an
interval, and let
\[
  \E\!\left[|\xi|^{p_i(\alpha)+p_j(\alpha)}\right] < \infty,
  \qquad i,j=1,\ldots,S.
\]
If $\alpha\in[0,1]\setminus\{\tfrac{1}{2}\}$, then the Form-B PATP basis
functions are linearly independent in $L_2(P)$ and the centered correlant
matrix $\mathbf{F}_S(\alpha)$ is positive definite.
\end{theorem}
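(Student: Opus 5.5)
The proof has three steps: show the entries of $\mathbf{F}_S(\alpha)$ are finite, identify it as a covariance matrix, and then show that a zero quadratic form forces a degenerate linear relation among the basis functions, which Theorem~\ref{thm:exponent-separation} together with Proposition~\ref{prop:linindep} rules out.

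\emph{Finiteness.} The moment hypothesis with $i=j$ gives $\E[\varphi_i^2]=\E[|\xi|^{2p_i(\alpha)}]<\infty$, so every $\varphi_i(\xi;\alpha)$ lies in $L_2(P)$. Cauchy--Schwarz then shows that all $\Psi_{ij}$ and $\Psi_i$ in \eqref{eq:psi-def} are finite. Hence $F_{ij}=\mathrm{Cov}(\varphi_i,\varphi_j)$ is well defined.

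\emph{Covariance structure.} By \eqref{eq:var-eta}, for any $\vec h\in\mathbb{R}^S$,
\[
  \vec h^{\top}\mathbf{F}_S(\alpha)\vec h=\Var\Bigl[\textstyle\sum_i h_i\varphi_i(\xi;\alpha)\Bigr]\ge 0 ,
\]
so $\mathbf{F}_S$ is positive semidefinite. Positive definiteness is therefore equivalent to the following statement: if $\Var[\sum_i h_i\varphi_i]=0$, then $\vec h=0$. A vanishing variance means $\sum_i h_i\varphi_i(\xi)=c$ almost surely for some constant $c$. In other words, the functions $1,\varphi_1,\dots,\varphi_S$ must be shown to be linearly independent in $L_2(P)$. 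The constant function must be included here, which is slightly stronger than Proposition~\ref{prop:linindep}.

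\emph{Ruling out a nontrivial relation.} Let $I$ be an interval contained in the support on which $P$ has positive mass everywhere. For the relation to hold almost surely, it must hold Lebesgue-a.e.\ on $I$ if the law is absolutely continuous there. More generally, it holds on a set dense in $I$; since every $\varphi_i$ is continuous, the relation then holds identically on $I$. The interval $I$ contains an open subinterval $J$ lying on one side of $0$, say $J\subset(0,\infty)$; the negative side is handled symmetrically through $\mathrm{sign}(\xi)$. On $J$ the relation reads
\[
  c\,x^{0}=\sum_{i=1}^{S} h_i\,x^{p_i(\alpha)} .
\]
By Theorem~\ref{thm:exponent-separation}, for $\alpha\ne\tfrac12$ the exponents $p_1=1,p_2(\alpha),\dots,p_S(\alpha)$ are pairwise distinct. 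Each is also strictly positive; this follows from \eqref{eq:p-i-final}, for example by checking that the minimum of the quadratic stays above $0$ on $[0,1]$. So together with $0$ we have $S+1$ distinct real exponents. Generalized power functions $x^{q_0},\dots,x^{q_S}$ with distinct exponents are linearly independent on any open subinterval of $(0,\infty)$. To see this, substitute $x=e^t$ to obtain distinct exponentials $e^{q_k t}$, which are independent on an interval by real-analyticity and the Vandermonde/Wronskian argument already invoked for Proposition~\ref{prop:linindep}. Hence $c=h_1=\dots=h_S=0$, and $\mathbf{F}_S(\alpha)$ is positive definite.

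\emph{Main obstacle.} The delicate step is passing from ``$=c$ almost surely'' to an identity on an open interval. This requires the support hypothesis to give positive mass to every subinterval of $I$, and requires continuity of the $\varphi_i$. I will state explicitly that ``support contains an interval'' means the interval lies in the topological support, which guarantees this. I also need positivity of $p_i(\alpha)$ so that the constant function does not collide with any exponent. Checking that positivity on $[0,1]$ from \eqref{eq:p-i-final} is the one computation I would verify carefully, since a zero exponent $p_i(\alpha)=0$ would allow a relation with $c\ne 0$.
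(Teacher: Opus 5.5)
Your route is the paper's: the moment hypothesis gives $L_2$, $\vec h^{\top}\mathbf F_S\vec h$ is a variance, and distinct exponents (Theorem~\ref{thm:exponent-separation}) plus power-function independence do the rest. You are right that strict positivity needs independence of $1,\varphi_1,\dots,\varphi_S$, a point the paper's sketch passes over. The gap is in how you handle the constant. Your claim that every $p_i(\alpha)$ is strictly positive on $[0,1]$ is false for $i\ge 6$.

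Write \eqref{eq:p-i-final} as $p_i(\alpha)=\frac1i\bigl[1-(i-1)(i-3)\alpha+2(i-1)^2\alpha^2\bigr]$. The bracket has discriminant $(i-1)^2\bigl[(i-3)^2-8\bigr]>0$ for $i\ge 6$, and both roots lie in $(0,\tfrac12)$. For instance, $p_6(\alpha)=\tfrac16(1-5\alpha)(1-10\alpha)$ vanishes at $\alpha=\tfrac1{10}$ and $\alpha=\tfrac15$. The paper's own declaration $p_i:[0,1]\to\mathbb{R}_+$ in \S\ref{sec:patp:family} has the same defect. At such an $\alpha$, your one-sided restriction to $J\subset(0,\infty)$ sees $\varphi_6\equiv 1$, so it only yields $h_j=0$ for $j\neq 6$ and $h_6=c$. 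Negative exponents between the roots are harmless, since the $x=e^t$ argument works for any distinct real exponents. What you actually need is $p_i(\alpha)\neq 0$, and that fails at two points for each $i\ge 6$.

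The missing idea is to use both signs of $\xi$. In Form~B a zero exponent gives $\varphi_k(\xi)=\mathrm{sign}(\xi)$, which is not constant once both $\{\xi>0\}$ and $\{\xi<0\}$ carry mass. After your one-sided step the relation reads $h_k\,\mathrm{sign}(\xi)=c$ almost surely. The side containing $J$ gives $h_k=c$, the other side gives $-h_k=c$, and hence $c=h_k=0$.

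The two-sided mass comes from the standing centering $\E[\xi]=0$ of \S\ref{sec:patp:family} together with non-degeneracy, and it cannot be dropped. Take $\xi$ uniform on $(1,2)$, $S=6$, $\alpha=\tfrac1{10}$: every hypothesis of the statement as literally written holds, yet $\varphi_6\equiv 1$, $F_{66}=0$, and $\mathbf F_6$ is singular. So your worry that a zero exponent permits $c\neq 0$ is correct only for one-sided laws, and your proof must explicitly invoke centering. For $S\le 5$ your positivity claim is true ($p_1\equiv 1$, and the discriminant is negative for $i=2,\dots,5$), so your argument is complete in that range, which includes the paper's $S=2$ case.
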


\begin{proof}[Proof sketch]
The moment condition puts every basis product in $L_1(P)$ and every basis
function in $L_2(P)$. By Proposition~\ref{prop:linindep} and
Theorem~\ref{thm:exponent-separation}, the basis functions are distinct and
linearly independent on any support interval. For any nonzero coefficient
vector $\lambda$, the variance of
$\sum_i\lambda_i\varphi_i(\xi;\alpha)$ is therefore strictly positive, which
is exactly positive definiteness of the centered correlant matrix.
\end{proof}

At the boundary points $\alpha = 0$ and $\alpha = 1$ the elements of the
matrix $\mathbf{F}_S$ are expressed respectively via fractional and
integer-power moments, but the same finiteness and support qualifications
remain necessary. A compact list of Lean-verified algebraic facts supporting
the construction is given in Appendix~\ref{app:technical}.

\subsection{PATP optimality criterion}\label{sec:patp:criterion}

\paragraph{PATP stochastic polynomial.}
Substituting the basis family~\eqref{eq:patp-basis-def} into the general
construction~\eqref{eq:eta-S}, we obtain the PATP polynomial
\begin{equation}\label{eq:patp-poly}
  \eta_S(\xi; \theta, \alpha)
  \;=\; h_0
       \;+\; h_1\,\xi
       \;+\; \sum_{i=2}^{S} h_i\,
              \mathrm{sign}(\xi)\,|\xi|^{p_i(\alpha)},
\end{equation}
where $\theta \in \Theta$~is the unknown parameter, $\alpha \in [0,1]$~is
the basis control parameter, and $h_0, h_1, \ldots, h_S$~are non-random
coefficients belonging to the definition.

\paragraph{Maximum property and normal equations.}
The global maximum property~\eqref{eq:mmpl-property} is preserved for any
fixed $\alpha$:
\begin{equation}\label{eq:patp-max-property}
  \theta_0
  \;=\; \arg\max_{\theta \in \Theta}\;
        \E\bigl[\eta_S(\xi; \theta, \alpha)\bigr],
  \qquad \forall\,\alpha \in [0, 1]\setminus\{\tfrac{1}{2}\}.
\end{equation}
The corresponding normal equations for the optimal coefficients
$\vec h^{*}(\theta, \alpha)$:
\begin{equation}\label{eq:patp-normal-eq}
  \mathbf{F}_S(\theta; \alpha) \cdot \vec h^{*}(\theta, \alpha)
  \;=\; \vec b(\theta; \alpha),
\end{equation}
where $\mathbf{F}_S(\theta; \alpha)$~is the matrix of centered correlants
of the PATP basis, and the vector $\vec b(\theta; \alpha)$ has components
\begin{equation}\label{eq:b-patp}
  b_i(\theta; \alpha)
  \;=\; \frac{\partial \Psi_i(\theta; \alpha)}{\partial \theta},
  \qquad i = 1, \ldots, S.
\end{equation}

\paragraph{PATP estimator.}
The PATP estimator of the parameter $\theta$ for fixed $\alpha$ is defined as
\begin{equation}\label{eq:patp-estimator}
  \hat\theta_{\PATP,S}(\alpha)
  \;=\; \arg\max_{\theta \in \Theta}\;
        L_S\bigl(\theta; x_1, \ldots, x_N; \alpha\bigr),
\end{equation}
where $L_S(\theta; \cdots; \alpha)$~is the empirical functional
\eqref{eq:LS-empirical} with the PATP basis~\eqref{eq:patp-basis-def} and
coefficients $h_i^{*}(\theta; \alpha)$ from the system~\eqref{eq:patp-normal-eq}.

The canonical formula for the asymptotic variance of the
estimator~\eqref{eq:var-pmm-canonical} carries over to the PATP format:
\begin{equation}\label{eq:patp-variance}
  \Var\bigl[\hat\theta_{\PATP,S}(\alpha)\bigr]
  \;=\; \frac{1}{N \cdot
                \vec b^{\top}(\theta_0; \alpha)\,
                \mathbf{F}_S^{-1}(\theta_0; \alpha)\,
                \vec b(\theta_0; \alpha)},
\end{equation}
and the corresponding variance reduction coefficient $g_S(\alpha)$ is the
main object of theoretical analysis in~\S\ref{sec:efficiency}.

\paragraph{Adaptive choice of $\alpha$.}
The PATP construction stands out among neighboring semiparametric frameworks
precisely through the possibility of \emph{adapting} the control parameter
$\alpha$ to the estimated characteristics of the noise distribution.
The optimal $\alpha^{*}$ is expected to be a function of the standardized
cumulants:
\begin{equation}\label{eq:alpha-optimal}
  \alpha^{*}
  \;=\; \arg\min_{\alpha \in [0, 1]}\; g_S(\alpha; \gamma_3, \gamma_4),
\end{equation}
which yields the algorithmic scheme ``OLS $\to$ estimate $\gamma_3, \gamma_4$
from residuals $\to$ calibrate $\alpha^{*}$ $\to$ PATP estimate with chosen
$\alpha^{*}$'' (detailed in~\S\ref{sec:alg:calibration}).
The unimodality of the function $g_2(\alpha)$ and the corresponding
theoretical results on the optimal $\alpha^{*}$ are developed
in~\S\ref{sec:eff:unimodal}.

\section{Theoretical efficiency analysis}\label{sec:efficiency}

This section derives a closed-form formula for the variance reduction
coefficient $g_2(\alpha)$ of the PATP estimator of degree $S = 2$ in the
problem of estimating a location parameter from residuals~$\xi$.
\S\ref{sec:eff:gs} formalizes the general definition of $g_S(\alpha)$ via
the variance~\eqref{eq:patp-variance}. \S\ref{sec:eff:s2} develops the
computation of the matrix $\mathbf{F}_2(\alpha)$ and the vector
$\vec b(\alpha)$ through the fractional absolute moments~$\nu_q$ and
derives the closed-form formula~$g_2(\alpha)$. \S\ref{sec:eff:unimodal}
analyzes the behavior of $g_2$ as a function of~$\alpha$ and derives the
degeneracy at $\alpha = 1/2$. \S\ref{sec:eff:heavy} discusses the
advantage of the fractal regime $\alpha = 0$ in the heavy-tail setting
through the exclusively fractional nature of the dependence~$g_2$.

\subsection[Variance reduction coefficient gS(alpha)]{Variance reduction coefficient $g_S(\alpha)$}\label{sec:eff:gs}

From~\eqref{eq:patp-variance} the asymptotic variance of the PATP estimator
of degree~$S$ for fixed $\alpha \in [0,1] \setminus \{1/2\}$:
\begin{equation}\label{eq:var-patp-2}
  \Var\bigl[\hat\theta_{\PATP,S}(\alpha)\bigr]
  \;=\; \frac{1}{N \cdot
                \vec b^{\top}(\alpha)\,
                \mathbf{F}_S^{-1}(\alpha)\,
                \vec b(\alpha)}.
\end{equation}
The variance of the OLS estimator of the mean is $\Var[\hat\mu_{\mathrm{OLS}}] = c_2/N$.
Hence
\begin{equation}\label{eq:gS-alpha-def}
  g_S(\alpha)
  \;:=\; \frac{\Var\bigl[\hat\theta_{\PATP,S}(\alpha)\bigr]}
              {\Var\bigl[\hat\mu_{\mathrm{OLS}}\bigr]}
  \;=\; \frac{1}{c_2 \cdot \vec b^{\top}(\alpha)\,
                \mathbf{F}_S^{-1}(\alpha)\,\vec b(\alpha)}.
\end{equation}
The smaller $g_S(\alpha)$ is for a given distribution of $\xi$, the more
efficient the PATP estimator is relative to OLS. In particular,
$g_S(\alpha) = 1$ indicates no gain (degenerate regime or Gaussian noise).

\subsection[Specialization to S = 2: closed-form formula]{Specialization to $S = 2$: closed-form formula}\label{sec:eff:s2}

\paragraph{Location and centering convention.}
The formulas below use residuals
\[
  \xi = X-\theta_0,
\]
where $\theta_0$ is the target location parameter. In a finite-mean setting
one may take $\theta_0=\E[X]$ and hence $\E[\xi]=0$. In an infinite-mean
location-family setting, however, $\theta_0$ must be supplied by the model
location, median, or another robust center; statements involving OLS variance
or classical mean estimation are then not meaningful without additional
qualification. This is why Cauchy is treated in this paper as a limiting
counterexample for the $S=2$ variance formula rather than as a covered
finite-mean case.

\paragraph{Notation.}
For convenience we introduce
\begin{equation}\label{eq:p-notation}
  p \;:=\; p_2(\alpha)
  \;=\; \tfrac{1}{2} + \tfrac{1}{2}\alpha + \alpha^2,
\end{equation}
i.e., the second exponent of the PATP parameterization~\eqref{eq:p-i-final}
at $i = 2$. Direct substitution verifies: $p(0) = 1/2$, $p(1/2) = 1$,
$p(1) = 2$.

\paragraph{Fractional moments.}
For residuals centered in the above sense we introduce the absolute and
signed fractional moments:
\begin{equation}\label{eq:moments-q}
  \nu_q \;:=\; \E\bigl[|\xi|^q\bigr],
  \qquad
  \sigma_q \;:=\; \E\bigl[\mathrm{sign}(\xi)\,|\xi|^q\bigr].
\end{equation}
The required range of $q$ is not arbitrary: in the $S=2$ formula below one
needs $\nu_{p-1}$, $\nu_{p+1}$, $\nu_{2p}$, and $\sigma_p$ to exist. When
$\E[\xi]=0$, $\nu_2 = c_2$ (variance), $\sigma_3 = c_3$ (third central
moment), and $\nu_1$ is the mean absolute deviation. For symmetric
distributions $\sigma_q = 0$ for all $q$ for which the expectation exists.

\begin{theorem}[Moment and regularity conditions for $g_2(\alpha)$]\label{thm:g2-conditions}
Fix $\alpha\in[0,1]\setminus\{\tfrac{1}{2}\}$ and put
$p=p_2(\alpha)$. Formula~\eqref{eq:g2-alpha} is meaningful if:
\begin{enumerate}
  \item $c_2=\E[\xi^2]$, $\nu_{p-1}$, $\nu_{p+1}$, $\nu_{2p}$, and
        $\sigma_p$ are finite;
  \item when $p<1$, the law has no atom at zero and is locally regular
        enough near zero for $\E[|\xi|^{p-1}]$ to be finite
        (a bounded density near zero is sufficient; see
        Appendix~\ref{app:local-integrability});
  \item the determinant
        $\Delta_2(\alpha)=c_2(\nu_{2p}-\sigma_p^2)-\nu_{p+1}^2$ is positive.
\end{enumerate}
Because the current $S=2$ construction keeps the linear basis
$\varphi_1(\xi)=\xi$, it still requires finite variance.  The fractal branch
therefore lowers the nonlinear moment order relative to the classical PMM2
fourth-moment requirement, but it does not cover infinite-variance laws.
\end{theorem}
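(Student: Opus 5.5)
I will prove the statement by writing out, at $S=2$, each ingredient of the variance formula \eqref{eq:gS-alpha-def} and recording the moment it requires. The basis is $\varphi_1(\xi)=\xi$ and $\varphi_2(\xi)=\mathrm{sign}(\xi)|\xi|^p$ with $p=p_2(\alpha)$. At the true parameter the centered correlants are
\[
F_{11}=c_2,\qquad F_{12}=\E[\xi\,\mathrm{sign}(\xi)|\xi|^p]-0\cdot\sigma_p=\nu_{p+1},\qquad F_{22}=\nu_{2p}-\sigma_p^2 .
\]
So the entries of $\mathbf{F}_2(\alpha)$ are finite exactly when $c_2$, $\nu_{p+1}$, $\nu_{2p}$ and $\sigma_p$ are finite. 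By Cauchy--Schwarz, $c_2<\infty$ and $\nu_{2p}<\infty$ already force $\nu_{p+1}<\infty$, so that condition is somewhat redundant. The determinant is $\Delta_2(\alpha)=c_2(\nu_{2p}-\sigma_p^2)-\nu_{p+1}^2$, which is condition~3.

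Next I will compute $\vec b(\alpha)$. Write $\Psi_i(\theta)=\E[\varphi_i(X-\theta)]$. Then $b_1=-1$. For $b_2$, note that $\frac{d}{d\xi}\,\mathrm{sign}(\xi)|\xi|^p=p|\xi|^{p-1}$, so, after absorbing the common sign as in \S\ref{sec:bg:integer}, $b_2=p\,\nu_{p-1}$.

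The one genuinely analytic step is justifying this interchange of $\partial_\theta$ and $\E$. For $p\ge1$ the map $u\mapsto \mathrm{sign}(u)|u|^p$ is $C^1$, and a mean-value bound dominated by $|\xi|^{p-1}+1$ (integrable once $\nu_{p-1}$ and lower moments are finite) settles it via dominated convergence.

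For $p<1$, i.e.\ $\alpha<1/2$, the derivative blows up at zero, and this is where I expect the main obstacle. Here I would argue through the density instead: write $\Psi_2(\theta)=\int \mathrm{sign}(x-\theta)|x-\theta|^p f(x)\,dx$ and differentiate. A bounded density near $\theta_0$ with no atom makes $\int |x-\theta|^{p-1}f(x)\,dx$ locally uniformly integrable, because $p-1>-1$. This is exactly the local-integrability argument deferred to Appendix~\ref{app:local-integrability}, and it gives condition~2.

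Finally I will invert $\mathbf{F}_2$. Condition~3 together with Theorem~\ref{thm:positive-definite} (the case $\alpha\neq 1/2$) guarantees that $\mathbf{F}_2$ is invertible, and then
\[
\vec b^{\top}\mathbf{F}_2^{-1}\vec b=\frac{F_{22}-2b_2F_{12}+b_2^2F_{11}}{\Delta_2}.
\]
Substituting into \eqref{eq:gS-alpha-def} yields the closed form \eqref{eq:g2-alpha}, which is well defined and positive under conditions~1--3.

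The closing remark of the statement follows directly. The factor $c_2$ appears both in $F_{11}$ and in the OLS normalization, so finite variance cannot be dropped even on the fractal branch $p=1/2$. There the nonlinear moments needed are only $\nu_{-1/2}$, $\nu_{3/2}$ and $\nu_1$, all below fourth order.
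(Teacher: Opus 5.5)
Your proposal is correct and follows the same route as the paper's sketch. You read off the moments needed for the entries of $\mathbf{F}_2(\alpha)$ and for $\vec b(\alpha)$, isolate the near-zero singularity of $|\xi|^{p-1}$ when $p<1$, attribute the finite-variance requirement to $F_{11}=c_2$, and tie $\Delta_2>0$ to Theorem~\ref{thm:positive-definite}. Your extras (the Cauchy--Schwarz observation $\nu_{p+1}\le\sqrt{c_2\,\nu_{2p}}$, the dominated-convergence and uniform-integrability justification for exchanging $\partial_\theta$ and $\E$, and the explicit inversion) only make explicit what the paper leaves implicit. One small remark: condition~3 together with $c_2>0$ already makes $\mathbf{F}_2$ positive definite, so the appeal to Theorem~\ref{thm:positive-definite} is not needed for invertibility or for positivity of $\vec b^{\top}\mathbf{F}_2^{-1}\vec b$.
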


\begin{proof}[Proof sketch]
The matrix entries in $\mathbf{F}_2(\alpha)$ require the second-order
products $\xi\varphi_2(\xi;\alpha)$ and $\varphi_2^2(\xi;\alpha)$ to be
integrable, which gives $\nu_{p+1}$ and $\nu_{2p}$ plus the signed moment
$\sigma_p$. Differentiating the signed-power basis with respect to location
introduces $|\xi|^{p-1}$, hence $\nu_{p-1}$ and the local condition near
zero when $p<1$.  The term $F_{11}=c_2$ is the reason finite variance is
still required for the present $S=2$ formula. Positive determinant is the
$S=2$ instance of the positive-definiteness requirement from
Theorem~\ref{thm:positive-definite}.
\end{proof}

\paragraph{PATP basis of degree 2.}
By~\eqref{eq:patp-basis-def}: $\varphi_1(\xi) = \xi$,
$\varphi_2(\xi; \alpha) = \mathrm{sign}(\xi) |\xi|^{p}$. Expected values:
\begin{equation}\label{eq:psi-1-2}
  \Psi_1 \;=\; \E[\xi] = 0,
  \qquad
  \Psi_2(\alpha) \;=\; \E[\mathrm{sign}(\xi)|\xi|^p] \;=\; \sigma_p.
\end{equation}

\paragraph{Matrix $\mathbf{F}_2(\alpha)$.}
Computing sequentially via~\eqref{eq:F-matrix}:
\begin{align}
  F_{11}(\alpha)
    &= \E[\xi^2] - 0
     = c_2,
  \label{eq:F11-patp}\\
  F_{12}(\alpha) = F_{21}(\alpha)
    &= \E\bigl[\xi \cdot \mathrm{sign}(\xi)|\xi|^p\bigr] - 0 \cdot \sigma_p
     = \E\bigl[|\xi| \cdot |\xi|^p\bigr]
     = \nu_{p+1},
  \label{eq:F12-patp}\\
  F_{22}(\alpha)
    &= \E\bigl[(\mathrm{sign}(\xi)|\xi|^p)^2\bigr] - \sigma_p^2
     = \E\bigl[|\xi|^{2p}\bigr] - \sigma_p^2
     = \nu_{2p} - \sigma_p^2.
  \label{eq:F22-patp}
\end{align}
Note that $F_{12}$ is expressed through the \emph{symmetric} fractional
moment $\nu_{p+1}$ (rather than the signed $\sigma_{p+1}$): the product
$\xi \cdot \mathrm{sign}(\xi) = |\xi|$ reduces the parity of the product
to~$|\xi|^{p+1}$, which is a structural feature of Form~B with
signed-parity. The determinant of the matrix:
\begin{equation}\label{eq:detF-patp}
  \Delta_2(\alpha)
  \;:=\; \det \mathbf{F}_2(\alpha)
  \;=\; c_2(\nu_{2p} - \sigma_p^2) - \nu_{p+1}^2.
\end{equation}

\paragraph{Vector $\vec b(\alpha)$.}
The components of the vector~$\vec b$ are computed via~\eqref{eq:b-patp}.
Taking the derivative with respect to~$\mu$ ($\xi = y - \mu$) and using
$\partial_{\xi}[\mathrm{sign}(\xi)|\xi|^p] = p|\xi|^{p-1}$
(an odd basis function with an even derivative almost everywhere),
we obtain:
\begin{align}
  b_1 \;=\; \partial_{\mu} \Psi_1
     &= -1,\\
  b_2(\alpha) \;=\; \partial_{\mu} \Psi_2(\alpha)
     &= -p \cdot \E\bigl[|\xi|^{p-1}\bigr]
     \;=\; -p \nu_{p-1}.
\end{align}
The sign~$-1$ can be absorbed into a redefinition of the coefficients~$h_i$
without affecting the quadratic form~\eqref{eq:gS-alpha-def}; henceforth
we work with
\begin{equation}\label{eq:b-patp-final}
  \vec b(\alpha)
  \;=\;
  \begin{pmatrix} 1 \\ p\,\nu_{p-1} \end{pmatrix}.
\end{equation}

\paragraph{Closed-form formula $g_2(\alpha)$.}
Substituting the matrix~\eqref{eq:F11-patp}--\eqref{eq:F22-patp} and the
vector~\eqref{eq:b-patp-final} into~\eqref{eq:gS-alpha-def}, we first
compute the quadratic form:
\begin{align}
  \vec b^{\top}\,\mathbf{F}_2^{-1}\,\vec b
   &= \frac{1}{\Delta_2}\bigl[\,F_{22} - 2 F_{12}\, p\,\nu_{p-1}
                              + F_{11}\,(p\,\nu_{p-1})^2\bigr]
   \notag\\
   &= \frac{(\nu_{2p} - \sigma_p^2) \,-\, 2 p\,\nu_{p+1}\,\nu_{p-1}
              \,+\, p^2\,c_2\,\nu_{p-1}^2}
           {c_2(\nu_{2p} - \sigma_p^2) - \nu_{p+1}^2}.
  \label{eq:bFb-patp}
\end{align}
Hence
\begin{equation}\label{eq:g2-alpha}
  \boxed{\,
  g_2(\alpha)
  \;=\;
  \frac{c_2(\nu_{2p} - \sigma_p^2) - \nu_{p+1}^2}
       {c_2\bigl[\,(\nu_{2p} - \sigma_p^2)
                  - 2 p\,\nu_{p+1}\,\nu_{p-1}
                  + p^2\,c_2\,\nu_{p-1}^2\,\bigr]},
  \qquad p = p_2(\alpha).
  \,}
\end{equation}
Formula~\eqref{eq:g2-alpha} is the central result of the present paper:
an explicit dependence of $g_2$ on $\alpha$ through the single exponent
$p = p_2(\alpha)$ and four moments $\nu_{p-1}, \nu_{p+1}, \nu_{2p},
\sigma_p$.

\paragraph{Verification in the signed-parity integer-power regime ($\alpha = 1$).}
At $\alpha = 1$ we have $p = 2$, so
$\nu_{p-1} = \nu_1$, $\nu_{p+1} = \nu_3$, $\nu_{2p} = \nu_4 = m_4 = c_4 + 3c_2^2$,
$\sigma_p = \sigma_2 = \E[\mathrm{sign}(\xi)\xi^2] = \E[\xi|\xi|]$
(the asymmetric second moment). Substituting into~\eqref{eq:g2-alpha},
we obtain the PATP analog $g_2(1)$, which does \emph{not} reduce literally
to the classical formula $1 - \gamma_3^2/(2+\gamma_4)$
from~\eqref{eq:g2-standardized}, since the PATP construction uses the
signed-parity function $\mathrm{sign}(\xi)\xi^2 = \xi|\xi|$ (odd) instead
of the canonical $\xi^2$ (even). For symmetric distributions
($\sigma_2 = 0$), formula~\eqref{eq:g2-alpha} at $\alpha=1$ becomes
\begin{equation}\label{eq:g2-symmetric}
  g_2(1; \text{symmetric})
  \;=\;
  \frac{c_2\nu_4-\nu_3^2}
       {c_2\left(\nu_4-4\nu_3\nu_1+4c_2\nu_1^2\right)} ,
\end{equation}
(distinct from the classical $g_2 = 1$ for symmetric distributions), which
shows that the signed-parity endpoint has different efficiency behavior even
when classical skewness is zero. Here $\nu_1,\nu_3,\nu_4$ are absolute
moments; only the signed moments vanish by symmetry.

\paragraph{No-gain condition: connection to Stein's identity.}
The PATP estimator yields no gain over OLS ($g_2(\alpha) = 1$) if and only if
the optimal coefficient on $\varphi_2$ vanishes, i.e.\ $F_{11}\,b_2 = F_{12}\,b_1$,
which by~\eqref{eq:F11-patp}-\eqref{eq:F12-patp} and~\eqref{eq:b-patp-final} is
equivalent to the identity
\begin{equation}\label{eq:stein-condition}
  \nu_{p+1} \;=\; p\,c_2\,\nu_{p-1}
  \qquad\Longleftrightarrow\qquad
  \E\bigl[\xi\,\varphi_2(\xi)\bigr] \;=\; c_2\,\E\bigl[\varphi_2'(\xi)\bigr].
\end{equation}
This is precisely \emph{Stein's identity} for the signed-parity function
$\varphi_2(\xi) = \mathrm{sign}(\xi)|\xi|^p$. For Gaussian noise
$\xi \sim \mathcal{N}(0, \sigma^2)$ it holds for \emph{every} $p$ (hence for
every $\alpha$), which immediately yields $g_2(\alpha) \equiv 1$: the Gaussian
case admits no gain in any PATP regime, consistent with the efficiency of the
sample mean. For any non-Gaussian symmetric noise (Laplace, generalized
Gaussian, EPD), identity~\eqref{eq:stein-condition} fails, so a gain persists
even at the integer-power endpoint $\alpha = 1$.

\paragraph{Verification in the fractal regime ($\alpha = 0$).}
At $\alpha = 0$ we have $p = 1/2$, so
$\nu_{p-1} = \nu_{-1/2}$, $\nu_{p+1} = \nu_{3/2}$,
$\nu_{2p} = \nu_1$, $\sigma_p = \sigma_{1/2}$. Apart from the baseline
second moment $c_2$ inherited from $\varphi_1(\xi)=\xi$, the nonlinear
branch now involves only fractional orders $-1/2$, $1$, and $3/2$.  The
near-zero term $\nu_{-1/2}$ is finite under ordinary bounded-density
regularity (Appendix~\ref{app:local-integrability}).  Thus the $S=2$
fractal endpoint relaxes the classical fourth-moment requirement to a
finite-variance requirement; infinite-variance stable laws and Cauchy remain
outside this particular variance formula.

\subsection[Degeneracy and behavior of g2(alpha)]{Degeneracy and behavior of $g_2(\alpha)$}\label{sec:eff:unimodal}

\paragraph{Degeneracy at $\alpha = 1/2$.}
At $\alpha = 1/2$: $p(1/2) = 1$, so
$\nu_{p+1} = \nu_2 = c_2$, $\nu_{2p} = \nu_2 = c_2$,
$\sigma_p = \sigma_1 = \E[\mathrm{sign}(\xi)|\xi|] = \E[\xi] = 0$
(centered), $\nu_{p-1} = \nu_0 = 1$. Substituting into the determinant:
\begin{equation}\label{eq:det-degenerate}
  \Delta_2(1/2) \;=\; c_2 \cdot c_2 \,-\, c_2^2 \;=\; 0,
\end{equation}
i.e., the matrix $\mathbf{F}_2(1/2)$ is degenerate. This is the analytic
manifestation of the structural observation in~\S\ref{sec:patp:body}: at
$\alpha = 1/2$ the PATP family collapses to a single linear function,
$\mathbf{F}_2$ has rank~1, and the PATP estimator reduces to OLS. Both the
numerator and denominator of formula~\eqref{eq:g2-alpha} vanish, giving an
indeterminate form $0/0$; the limit $\lim_{\alpha \to 1/2} g_2(\alpha) = 1$
is computed by L'Hôpital's rule.

\paragraph{Limiting behavior.}
Substituting the boundary values $\alpha = 0$ and $\alpha = 1$
into~\eqref{eq:g2-alpha} and using the cancellation of fractional moments,
we obtain explicit formulas $g_2(0)$ and $g_2(1)$ as functions of the
characteristics of the noise distribution. For the symmetric canonical laws of
this paper (Laplace, Generalized Gaussian, Student-$t$) these values are
computed in closed form and presented in~\S\ref{sec:illustrations}.

\paragraph{Numerical curve shape on $[0, 1] \setminus \{1/2\}$.}
Numerical sweeps of~\eqref{eq:g2-alpha} show a structured dependence of the
minimizer $\alpha^{*}$ on the shape of the noise distribution. The adaptive
selection of $\alpha^{*}$ does \emph{not} rely on a unimodality theorem: in
practice $\alpha^{*}$ is obtained by direct minimization of the closed-form
$g_2(\alpha)$ (or its plug-in estimate) over a grid on $[0,1]\setminus\{1/2\}$,
as specified in~\S\ref{sec:alg:calibration}. A general analytic proof of
unimodality remains an open problem, since formula~\eqref{eq:g2-alpha} is a
rational function of four integrals $\nu_{p-1}, \nu_{p+1}, \nu_{2p}, \sigma_p$,
each of which depends on $\alpha$ through $p_2(\alpha)$ nonlinearly; we
therefore present the curve shape as an empirical observation on the symmetric
canonical distributions of~\S\ref{sec:illustrations} (Laplace, Generalized
Gaussian, Student-$t$), and claim no general unimodality theorem.

\subsection{Heavy-tail behavior}\label{sec:eff:heavy}

\paragraph{Structural advantage of the fractal regime.}
The fundamental difference between PATP and the classical Kunchenko PMM
manifests in the \emph{integrability requirements}. The classical
formula~\eqref{eq:g2-standardized} requires finiteness of the fourth
moment $\E[|\xi|^4] = \nu_4 < \infty$. This is violated by many
finite-variance heavy-tailed distributions, for example Student or Pareto
families with finite second but infinite fourth moment, and is often
empirically unstable in financial or telecommunication data.

By contrast, the PATP formula~\eqref{eq:g2-alpha} at $\alpha = 0$ involves
the baseline second moment $c_2$ and fractional nonlinear orders
$\{-1/2, 1, 3/2\}$. This expands the admissible class from fourth-moment
models to finite-variance models:

\begin{itemize}
  \item \textbf{Laplace} ($\nu_q$ finite for all $q > -1$):
        PATP works both at $\alpha = 0$ and at $\alpha = 1$.
  \item \textbf{Finite-variance, infinite-fourth-moment tails}:
        PATP at $\alpha = 0$ can remain meaningful because it requires
        $c_2$ and fractional moments up to order $3/2$, but not $\nu_4$.
  \item \textbf{Infinite-variance laws} (Cauchy or stable laws with
        $\alpha_s<2$): the present $S=2$ variance formula is not meaningful
        because $c_2$ and the OLS reference variance are infinite.  These
        cases require a different location functional or a modified basis
        without the unsmoothed linear component.
\end{itemize}

\paragraph{Observed shape of the curve $g_2(\alpha)$.}
Based on~\eqref{eq:g2-alpha} and the singular point at $\alpha=1/2$, the
canonical examples show a recurring two-sided pattern: the limit near
$\alpha=1/2$ corresponds to the OLS reference level, while one or both sides
of the interval may contain lower values depending on the distribution.
Figure~\ref{fig:g2_alpha} illustrates this behavior for five canonical
distributions. For heavy-tailed distributions
(Laplace, GG(0.5)) the global minimum of $g_2$ is attained in the fractal
regime $\alpha \to 0$; for light-tailed distributions (Uniform, GG(4))~---
in the signed-parity integer-power regime $\alpha \to 1$. The Gaussian case gives
$g_2 \equiv 1$ (a flat horizontal line), corresponding to no gain over OLS.

\begin{figure}[!htb]
  \centering
  \includegraphics[width=0.9\linewidth]{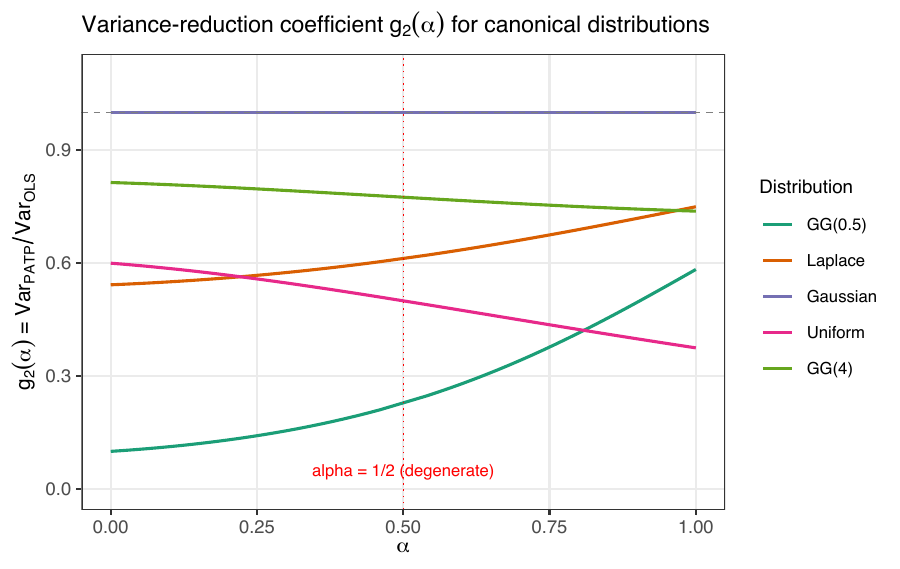}
  \caption{Theoretical variance reduction coefficient
    $g_2(\alpha) = \Var[\hat\mu_{\PATP}]/\Var[\hat\mu_{\mathrm{OLS}}]$
    for five canonical distributions, computed via the closed-form
    formula~\eqref{eq:g2-alpha}. The red dashed line marks the degenerate
    point $\alpha = 1/2$, where $g_2 \to 1$. A clear pattern is observed:
    for heavy tails (GG(0.5), Laplace) the optimum $\alpha^{*}$ shifts
    toward the fractal regime $\alpha = 0$; for light tails (Uniform,
    GG(4))~--- toward the power-polynomial regime $\alpha = 1$. The
    Gaussian case gives $g_2 \equiv 1$ (no gain).}
  \label{fig:g2_alpha}
\end{figure}

\paragraph{Formal verification of boundary cases.}
Three structural theorems about the boundary cases of the PATP exponents
($p_i(0) = 1/i$, $p_i(1/2) = 1$, $p_i(1) = i$) have been formalized and
machine-verified in Lean~4 (module \texttt{PATP.Param},
file \texttt{Lean/PATP/Param.lean}). Analogous theorems about the boundary
cases of the basis itself ($\varphi_i(\xi; 1/2) = \xi$ for all $i$;
\emph{odd symmetry} $\varphi_i(-\xi; \alpha) = -\varphi_i(\xi; \alpha)$
for all $i, \alpha$) have been formalized in module \texttt{PATP.Basis}.
Deriving formula~\eqref{eq:g2-alpha} in Lean (module
\texttt{PATP.G2Algebra}) is currently restricted to the symbolic algebra
from $\vec b^{\top}\mathbf F_2^{-1}\vec b$ to the displayed rational
expression. Formalizing the expectations as integrals with respect to a
probability measure remains beyond the current Lean layer; the exact scope
of the formalization is summarized in Appendix~\ref{app:lean-facts}.

\section{Implementation Algorithms}\label{sec:algorithms}

This section describes the computational implementation of the PATP estimator
$\hat\theta_{\PATP,S}(\alpha)$ from~\eqref{eq:patp-estimator}.
\S\ref{sec:alg:nr} describes the fixed-$\alpha$ score equation and the
solver stack used to avoid singular derivative behavior. \S\ref{sec:alg:calibration}
formulates calibration of $\alpha^{*}$ from fractional-moment criteria and
robust shape diagnostics. \S\ref{sec:alg:complexity} discusses computational
complexity and numerical aspects.

\subsection[Fixed-alpha score equation and solver stack]{Fixed-$\alpha$ score equation and solver stack}\label{sec:alg:nr}

\paragraph{Score function.}
For fixed $\alpha$ and optimal coefficients
$\vec h^{*}(\theta; \alpha)$ from~\eqref{eq:patp-normal-eq}, differentiating
the empirical functional~\eqref{eq:LS-empirical} with respect to $\theta$ gives
the \emph{PATP score function}
\begin{equation}\label{eq:patp-score}
  Z_N(\theta; \alpha)
  \;:=\; \frac{\partial L_S(\theta; x_1, \ldots, x_N; \alpha)}
              {\partial \theta}
  \;=\; \sum_{i=1}^{S} h_i^{*}(\theta; \alpha)
        \cdot \frac{1}{N} \sum_{n=1}^{N}
        \frac{\partial \varphi_i(\xi_n; \alpha)}{\partial \theta},
\end{equation}
where $\xi_n = x_n - R(\theta, z_n)$ (for the regression model) or
$\xi_n = x_n - \theta$ (for mean estimation). The estimate
$\hat\theta_{\PATP}(\alpha)$ is the solution of the equation
$Z_N(\theta; \alpha) = 0$.

\paragraph{Newton iterations.}
When the score is smooth and the derivative is numerically stable, one may
use the Newton-Raphson scheme for solving $Z_N(\theta; \alpha) = 0$:
\begin{equation}\label{eq:nr-iteration}
  \theta^{(k+1)}
  \;=\; \theta^{(k)} \;-\;
        \frac{Z_N\bigl(\theta^{(k)}; \alpha\bigr)}
             {Z_N'\bigl(\theta^{(k)}; \alpha\bigr)},
\end{equation}
with initial approximation $\theta^{(0)} = \hat\theta_{\mathrm{OLS}}$
(the OLS estimate) and convergence criterion $|\theta^{(k+1)} - \theta^{(k)}|
< \varepsilon$ for a pre-specified $\varepsilon > 0$
(typically $\varepsilon = 10^{-6} \cdot |\theta^{(0)}|$).

\paragraph{Derivatives of basis functions.}
For the PATP basis~\eqref{eq:patp-basis-def} the derivatives take the form
away from $\xi_n=0$:
\begin{equation}\label{eq:basis-derivatives}
  \frac{\partial \varphi_1}{\partial \theta} = -\frac{\partial R}{\partial \theta},
  \qquad
  \frac{\partial \varphi_i(\xi_n; \alpha)}{\partial \theta}
  \;=\; -p_i(\alpha) \cdot |\xi_n|^{p_i(\alpha) - 1}
        \cdot \frac{\partial R}{\partial \theta},
  \quad i \geq 2.
\end{equation}
Note that the derivative of the signed-parity basis function is \emph{even}
in $\xi_n$ (due to the doubling of the sign function): this is structurally
different from the classical $\partial(\xi^i)/\partial\theta = -i\,\xi^{i-1}
\partial R/\partial\theta$, which preserves parity ($i-1$ even $\Rightarrow$
odd; $i-1$ odd $\Rightarrow$ even).

\paragraph{Regularity and fallback solvers.}
For $p_i(\alpha)<1$, the derivative
$|\xi|^{p_i(\alpha)-1}$ is singular at zero. The Newton formula should
therefore be treated as the fast path, not as the only valid algorithm. The
implementation stack used in the experiments is:
\begin{enumerate}
  \item use bracketing root-finding for scalar location problems whenever the
        score changes sign on a robust interval around the preliminary
        center;
  \item use damped Newton only when the derivative is finite and the update
        decreases the absolute score;
  \item replace $|\xi|^p$ by a smoothed form
        $(\xi^2+\varepsilon^2)^{p/2}$ near zero when repeated zeros or
        discretized data make the derivative unstable;
  \item fall back to OLS or the chosen robust center when
        $|\alpha-\tfrac{1}{2}|<\delta$, because
        $\mathbf{F}_S(\alpha)$ is nearly singular.
\end{enumerate}
This regularity convention is the algorithmic counterpart of
Theorem~\ref{thm:g2-conditions}: differentiability is required only almost
everywhere, but numerical solvers must still protect the neighborhood of zero.

\paragraph{Full $\mathbf{F}_2^{-1}\mathbf{b}$ PATP solver.}
Algorithm~\ref{alg:patp-full} below specifies the full normal-equation
estimator used in the reproducible R pipeline (file
\texttt{R/05\_full\_patp\_estimator.R}). It is a one-step Newey
linearisation: the optimal weights $\mathbf{h}^{*}=\mathbf{F}_2^{-1}\mathbf{b}$
are computed empirically at an initial OLS centre and the location is then
updated by a single Newton step on the PATP score; an optional outer loop
re-builds $\mathbf{F}_2,\mathbf{b}$ at the updated centre and re-iterates
until the step shrinks below a target tolerance. The scalar signed-power
proxy of \eqref{eq:patp-score} is retained as an automatic fallback when
$\mathbf{F}_2$ is ill-conditioned.

\begin{algorithm}
\caption{PATP-2 full $\mathbf{F}_2^{-1}\mathbf{b}$ estimator.}\label{alg:patp-full}
\begin{algorithmic}[1]
\Require sample $x_{1:N}$; control parameter $\alpha\in[0,1]\setminus\{\tfrac{1}{2}\}$;
         tolerance $\varepsilon$; max iterations $K$; condition cap $\kappa_{\max}$
\Ensure location estimate $\hat\mu$
\State $p\gets p_2(\alpha)$; $\hat\mu\gets\bar x$
\For{$k=1,\ldots,K$}
  \State $\xi_n\gets x_n-\hat\mu$; compute $\hat\nu_{p-1},\hat\nu_{p+1},\hat\nu_{2p},
         \hat\sigma_p,\hat c_2$
  \State $\mathbf{F}_2\gets\begin{pmatrix}\hat c_2 & \hat\nu_{p+1}\\
                             \hat\nu_{p+1} & \hat\nu_{2p}-\hat\sigma_p^2\end{pmatrix}$,\quad
         $\mathbf{b}\gets(1,\,p\hat\nu_{p-1})^{\top}$
  \If{$\kappa(\mathbf{F}_2)>\kappa_{\max}$ \textbf{or} $|\det\mathbf{F}_2|<10^{-14}$}
    \State \Return \Call{ProxyScalar}{$x_{1:N},\alpha,\hat\mu$} \Comment{fallback}
  \EndIf
  \State $\mathbf{h}^{*}\gets\mathbf{F}_2^{-1}\mathbf{b}$
  \State $Z\gets h_1^{*}\bar\xi+h_2^{*}\hat\sigma_p$;\quad
         $Z'\gets -h_1^{*}-p h_2^{*}\hat\nu_{p-1}$
  \State $\Delta\hat\mu\gets -Z/Z'$;\quad clip $|\Delta\hat\mu|$ to $3\,\widehat{\mathrm{sd}}(x)$
  \State $\hat\mu\gets\hat\mu+\Delta\hat\mu$
  \If{$|\Delta\hat\mu|<\varepsilon\,\max(1,|\hat\mu|)$} \textbf{break} \EndIf
\EndFor
\State \Return $\hat\mu$
\end{algorithmic}
\end{algorithm}

For symmetric noise the iteration converges to the asymptotic variance
$1/(N\,\mathbf{b}^{\top}\mathbf{F}_2^{-1}\mathbf{b})$ predicted by
eq.~\eqref{eq:g2-alpha}; for asymmetric noise the signed moment
$\sigma_p\neq 0$ enters the score and the estimator retains a residual
bias of $O(\sigma_p)$ that does not vanish with~$N$. The latter is the
finite-asymmetry limitation of Form-B PATP discussed in
\S\ref{sec:patp:limits} and reported empirically in
\S\ref{sec:ill:are}; for the asymmetric Beta benchmark we therefore
recommend reporting the proxy estimator alongside the full one.

\subsection[Calibration of the optimal alpha]{Calibration of the optimal $\alpha^{*}$}\label{sec:alg:calibration}

\paragraph{Two-step procedure (basic version).}
The adaptive selection of $\alpha^{*}$ from~\eqref{eq:alpha-optimal} should
not rely only on classical cumulants in heavy-tail settings, because
$\hat\gamma_3$ and $\hat\gamma_4$ may be unstable or undefined. The basic
procedure is therefore formulated in terms of a preliminary center and a
chosen calibration criterion:

\begin{enumerate}
  \item \textbf{Preliminary estimation step.} Compute
        $\hat\theta^{(0)} = \hat\theta_{\mathrm{OLS}}$ and the corresponding
        residuals $\hat\xi_n = x_n - R(\hat\theta^{(0)}, z_n)$.
        Estimate either classical shape summaries
        $(\hat\gamma_3,\hat\gamma_4)$ when fourth moments are credible, or
        robust fractional summaries such as winsorized
        $\hat\nu_{p-1},\hat\nu_{p+1},\hat\nu_{2p},\hat\sigma_p$ on a grid
        of candidate $\alpha$ values.
  \item \textbf{Calibration step.} Select $\hat\alpha^{*}$ by one of three
        explicit criteria:
        (a) oracle minimization of theoretical $g_2(\alpha)$ when the
        distribution is known;
	        (b) plug-in minimization of $\hat g_2(\alpha)$ from estimated
	        fractional moments;
	        (c) lookup from an off-line table
	        $\alpha^{*}(\gamma_3,\gamma_4)$ as an engineering shortcut.
	        This third option is retained only in a non-theorem engineering
	        role and requires independent calibration data.
  \item \textbf{PATP estimation step.} Execute Newton-Raphson~\eqref{eq:nr-iteration}
        with fixed $\hat\alpha^{*}$ and initial value
        $\theta^{(0)} = \hat\theta^{(0)}$.
\end{enumerate}

\paragraph{Extended procedure with $k$ as a third parameter (finite-variance only).}
In cases where the two-parameter table $\alpha^{*}(\gamma_3, \gamma_4)$
yields an ambiguous result~--- in particular, for symmetric distributions with
$\hat\gamma_3 \approx 0$, where the excess $\hat\gamma_4$ alone cannot
distinguish structurally different shapes (Laplace vs.\ Simpson, see the
table in~\S\ref{sec:bg:entropy})~--- the extended calibration scheme with
the \emph{entropy coefficient}~$\hat k$ as a third parameter can be used
when the variance is finite and stably estimable. It is not a calibration
tool for infinite-variance regimes.
Algorithm:

\begin{enumerate}
  \item \textbf{Steps~1--2} as in the basic version: compute
        $\hat\theta^{(0)}, \hat\xi_n, \hat\gamma_3, \hat\gamma_4$.
  \item \textbf{Ambiguity check.} If the table
        $\alpha^{*}(\gamma_3, \gamma_4)$ at $(\hat\gamma_3, \hat\gamma_4)$
        yields a variance of the estimate $\alpha^{*}$ exceeding the threshold
        $\tau_{\alpha} = 0.1$~--- proceed to step~3'; otherwise
        continue with the standard procedure.
  \item[3'.] \textbf{Entropy coefficient estimation.}
        Via kernel density estimation (KDE) on $\hat\xi_n$ with an
        Epanechnikov or triangular kernel, compute
        $\hat H = -N^{-1}\sum_n \ln \hat f(\hat\xi_n)$ (an empirical
        approximation of the differential entropy by the plug-in method
        \citep[Ch.~12.4]{coverthomas2006}). Then
        $\hat k = e^{\hat H}/(2 \hat\sigma)$, where
        $\hat\sigma = \sqrt{m_2}$.
  \item[4'.] \textbf{Calibration with $k$.} Select $\hat\alpha^{*}$
        from the \emph{extended} table $\alpha^{*}(\gamma_3, \gamma_4, k)$
        by nearest-neighbor lookup on a three-dimensional grid.
  \item[5'.] \textbf{PATP estimation step} as in the basic version.
\end{enumerate}

\paragraph{When to use the extended procedure.}
The extended scheme is useful in the following practical scenarios:
(a)~symmetric distributions with $|\hat\gamma_3| < 0.1$ and
$|\hat\gamma_4| < 1$, where $\gamma_4$ is weakly informative;
(b)~suspected multimodal distributions, where $\gamma_4$ is masked by
component mixing;
(c)~finite-variance heavy-tail cases, where $\hat\gamma_4$ is unstable due
to the slow convergence of the sample fourth-order moment.
For typical biostatistical problems with $|\gamma_3| > 0.5$ the basic
two-parameter scheme is sufficient.

\paragraph{Remark on computing $\hat k$.}
The KDE estimate of entropy $\hat H$ has bias $O(N^{-2/(d+4)})$, where $d = 1$
for scalar $\xi$, i.e., $O(N^{-0.4})$ for $N \gtrsim 100$. This is
substantially slower convergence than the cumulant estimates
($O(N^{-1/2})$), so it should be treated as a diagnostic rather than a
theorem-level calibration statistic. For $N < 100$ it is recommended to use
only the basic scheme, since $\hat k$ may have large variance.

\paragraph{Infinite-variance calibration.}
When variance is not finite, both $\gamma_4$ and $k=e^H/(2\sigma)$ cease to
be appropriate calibration variables. In that setting one should use
quantile summaries, L-moments, trimmed L-moments, or a tail-index estimate
as shape diagnostics; a corresponding PATP calibration rule is left for
future work.

\paragraph{Alternative: grid search over $\alpha$.}
If the computational budget permits, instead of table-based calibration one
can perform a full grid search over $\alpha \in \{0, 0.05, 0.1,
\ldots, 0.45, 0.55, \ldots, 1\}$ (excluding the degenerate point $0.5$):
for each $\alpha$ compute the full PATP estimate and select $\hat\alpha^{*}$
by minimizing the empirical variance of the estimates (via bootstrap or
cross-validation). This approach is more expensive computationally but does
not depend on a pre-built table.

\paragraph{Bootstrap sensitivity.}
For any plug-in or table-based $\hat\alpha^*$, the final report should include
a sensitivity interval from bootstrap resampling of residuals. Instability
near $\alpha=1/2$ is expected and should be reported as a diagnostic rather
than hidden by rounding the selected value.

\paragraph{Convergence of calibration.}
The calibration of $\alpha$ is not a distribution-free theorem in the present
paper. In applications, $\alpha$ selection must be validated for the target
task and should not be transferred blindly between estimation, detection, and
classification problems. When the target task changes, task type should be
treated as an additional confounder in the calibration rule.

\subsection{Computational complexity and numerical aspects}\label{sec:alg:complexity}

\paragraph{Complexity per iteration.}
Each Newton-Raphson iteration~\eqref{eq:nr-iteration} requires:
\begin{itemize}
  \item computing $\xi_n$ and $\varphi_i(\xi_n; \alpha)$ for all
        $n \in \{1, \ldots, N\}$ and $i \in \{1, \ldots, S\}$:
        $\mathcal{O}(N S)$ operations;
  \item solving the normal equations~\eqref{eq:patp-normal-eq}
        of order $S \times S$: $\mathcal{O}(S^3)$ operations;
  \item evaluating $Z_N$ and $Z_N'$: $\mathcal{O}(N S)$ operations.
\end{itemize}
Total complexity per iteration: $\mathcal{O}(N S + S^3)$. For typical
$S = 2, 3$ and $N \gg S$ the first term dominates: $\mathcal{O}(N S)$.

\paragraph{Comparison with baseline location estimators.}
Table~\ref{tab:complexity} compares per-call complexity with the robust
baselines used in~\S\ref{sec:ill:baselines}. Wall-clock benchmarks at
$N=10^4$ from \texttt{R/results/runtime\_summary.csv} are shown in the
right-most column (Apple M-series laptop, single-threaded R~4.3 reference
implementation; rerun via \texttt{Rscript R/run\_all.R}).

\begin{table}[!htb]
\centering
\small
\begin{tabular}{l l l r}
\toprule
Estimator & Per-call complexity & Notes & $N=10^4$ \\
\midrule
OLS (sample mean) & $\mathcal{O}(N)$ & one pass & $0.02$\,ms \\
Median            & $\mathcal{O}(N\log N)$ & sort or quickselect & $0.09$\,ms \\
Huber location    & $\mathcal{O}(K_{\mathrm H}N)$ & iterative re-weighting & $1.09$\,ms \\
Median-of-means   & $\mathcal{O}(N)$ & $\sqrt{N}$ blocks $+$ median & $0.62$\,ms \\
PATP-2 proxy      & $\mathcal{O}(K_{\mathrm B}N)$ & bracketing root-find & $1.72$--$1.81$\,ms \\
PATP-2 full $\mathbf{F}_2^{-1}\mathbf{b}$ & $\mathcal{O}(K(NS+S^3))$ & one-step Newey (Alg.~\ref{alg:patp-full}) & $1.57$\,ms \\
\bottomrule
\end{tabular}
\caption{Per-call complexity and reference wall-clock at $N=10^4$ on the
shared Laplace benchmark. $K$ denotes the outer Newton iterations of the
PATP solver ($K=3$ default); $K_{\mathrm H}$ and $K_{\mathrm B}$ are the
inner iteration counts of Huber and bracketing, respectively. For PATP the
$S^3$ inversion is negligible at $S=2$, so cost scales linearly in~$N$.}
\label{tab:complexity}
\end{table}

\paragraph{Empirical convergence and conditioning.}
Across 200 Monte Carlo replicates on standardised Laplace ($N=100$), the
empirical condition number $\kappa(\mathbf{F}_2)$ at the OLS centre has
median $43$, $161$, $140$, $46$ for $\alpha\in\{0.05,0.30,0.70,0.95\}$
respectively, well below the safety cap $\kappa_{\max}=10^{10}$.
$\det\mathbf{F}_2$ collapses below $10^{-12}$ only inside the protected
band $|\alpha-\tfrac{1}{2}|<0.01$ where the implementation returns OLS by
design (Algorithm~\ref{alg:patp-full}, step~5). The median outer Newton
iteration count is $3$, with most replicates reaching $|\Delta\hat\mu|$
within one order of magnitude of $\varepsilon$ before the iteration cap;
in larger samples ($N\geq 500$) the empirical ARE for the symmetric
Laplace benchmark agrees with the closed-form $g_2(\alpha)$ within
$0.6$--$4.2\,\%$ across $\alpha\in\{0.05,0.30,0.70,0.95\}$ (Phase~B Gate
G2 in \texttt{R/run\_all.R}).

\paragraph{Complexity of $\alpha$ calibration.}
If a grid of $G$ candidate $\alpha$ values is evaluated, the leading cost is
\[
  \mathcal{O}\!\left(G(NS+S^3)\right)
\]
for fixed data. A bootstrap sensitivity check with $B$ resamples raises the
dominant scalar-location cost to approximately $\mathcal{O}(BGNS)$, plus the
cost of robust moment estimation and any baseline estimators.

\paragraph{Numerical considerations.}

\begin{itemize}
  \item \textbf{Small $|\xi_n|$.} The basis function $|\xi_n|^{p_i(\alpha)}$
        for $p_i(\alpha) < 1$ (e.g., $\alpha = 0$, $p_i = 1/i$)
        has an infinite derivative at zero. Numerically it is recommended to
        use a smoothed signed power or a bracketing method instead of
        relying on an unsafeguarded Newton step.
  \item \textbf{Degeneracy of the matrix $\mathbf{F}_S(\alpha)$.}
        As $\alpha \to 1/2$ the matrix approaches a degenerate
        state ($\det \mathbf{F}_S \to 0$, see~\S\ref{sec:eff:unimodal}).
        For $|\alpha - 1/2| < 0.05$ it is recommended to fall back to
        the OLS estimate directly.
  \item \textbf{Testing near $\alpha = 0$.} In the fractal
        regime $|\xi_n|^{1/i}$ grows slowly, which may require
        increasing the number of Newton-Raphson iterations. As a
        convergence safeguard~--- a damped scheme with factor $\lambda \in (0, 1]$:
        $\theta^{(k+1)} = \theta^{(k)} - \lambda \cdot Z_N/Z_N'$, with
        $\lambda = 0.5$ for the first $5$ iterations.
\end{itemize}

\paragraph{Implementation.}
A reference implementation in R accompanies this paper and is publicly
available at
\url{https://github.com/SZabolotnii/Ku-PATP-code-supplement}
under an MIT licence with \texttt{CITATION.cff} and
\texttt{REPRODUCIBILITY.md} included. The full
$\mathbf{F}_2^{-1}\mathbf{b}$ estimator from Algorithm~\ref{alg:patp-full}
is in \texttt{R/05\_full\_patp\_estimator.R}; the Monte Carlo driver,
robust baselines, and runtime benchmarks live in
\texttt{R/03\_monte\_carlo.R}. A production-grade port to the
\texttt{EstemPMM} CRAN package~\citep{zabolotnii2024nonlinear,
zabolotnii2026estempmmcran} is on the roadmap; the API mapping is
documented in Appendix~\ref{app:estempmm-roadmap}.

\subsection{Reproducibility}\label{sec:alg:repro}

\paragraph{Data.}
The simulation study uses synthetic samples from four symmetric laws
(Laplace, $\mathrm{GG}(1.5)$, $\mathrm{GG}(4)$, Student-$t(6)$); generators and
seed (\texttt{set.seed(2026)}) are at the top of \texttt{R/03\_monte\_carlo.R}.
The real-data examples use the \texttt{EuStockMarkets} data set distributed with
base \proglang{R} (\texttt{R/07\_real\_data\_application.R}) and two public-domain
daily FRED series --- the broad trade-weighted U.S.\ dollar index
(\texttt{DTWEXBGS}) and the CAD/USD spot rate (\texttt{DEXCAUS}) --- retrieved by
series id with a fixed SHA-256 checksum (\texttt{R/fetch\_fred.R};
\texttt{R/07b\_real\_data\_panel.R}).

\paragraph{Software stack.}
R~$\geq 4.3$, base \texttt{stats}, plus \texttt{dplyr}, \texttt{tidyr},
\texttt{ggplot2} for tabulation and figures. No external PATP package is
needed at runtime; all PMM routines are vendored in \texttt{R/}. The
Lean~4 formalization referenced in Appendix~\ref{app:lean-facts} builds
with Mathlib~v4.26.0 via \texttt{lake build} in the same repository.

\paragraph{One-line recipe.}
\begin{verbatim}
git clone https://github.com/SZabolotnii/Ku-PATP-code-supplement
cd Ku-PATP-code-supplement
Rscript R/run_all.R
\end{verbatim}
Total wall-clock $\approx 40$\,s on a single thread; the script regenerates
every CSV in \texttt{R/results/} and every PDF in \texttt{figures/} that
the manuscript references, including the new
\texttt{fig8\_proxy\_vs\_full.pdf} comparison.

\section{Illustrative examples}\label{sec:illustrations}

This section illustrates the theoretical results of~\S\ref{sec:efficiency} on
canonical distributions and on a real data set. The scope is
the estimation of a location parameter of a centred random variable~$\xi$ whose
distribution is symmetric (zero median); this is the regime in which the
closed-form $g_2(\alpha)$ is exact.

The numerical values below are generated by the accompanying \proglang{R}
pipeline. All PATP estimates use the full normal-equation estimator
$\hat\theta_{\PATP,S}(\alpha)$ of~\eqref{eq:patp-estimator}, built from the
$\mathbf{F}_S^{-1}\vec b$ weights (\S\ref{sec:alg:nr}). For asymmetric laws the Form-B basis
carries an $O(\sigma_p)$ residual term (\S\ref{sec:disc:robust}) and is out of
scope here.

\paragraph{Reproducibility note.}
All tables and figures in this section are regenerated by the accompanying
\proglang{R} pipeline, which produces the theoretical, Monte Carlo, convergence,
robust-baseline, alpha-ablation, regression and real-data summary outputs in the
public repository.

\subsection[Canonical distributions and theoretical g2(alpha)]{Canonical distributions and theoretical $g_2(\alpha)$}\label{sec:ill:canonical}

\paragraph{Laplace (double-sided exponential).}
Density $f(\xi) = \tfrac{1}{2}\,e^{-|\xi|}$; standardized cumulants
$\gamma_3 = 0$, $\gamma_4 = 3$ (heavy tails); fractional moments
$\nu_q = \Gamma(q+1)$ for $q > -1$. Substituting into~\eqref{eq:g2-alpha} with
$\sigma_p = 0$ gives
$g_2^{\mathrm{Lap}}(\alpha) = \frac{2\Gamma(2p+1) - \Gamma(p+2)^2}{2\bigl[\Gamma(2p+1) - 2p\,\Gamma(p+2)\Gamma(p) + 2p^2\Gamma(p)^2\bigr]}$,
with exact endpoints $g_2(0)\approx0.5425$ and $g_2(1)=3/4$; the minimum is at
the fractal boundary $\alpha=0$.

\paragraph{Generalized Gaussian $\mathrm{GG}(\beta)$.}
Density $f(\xi) \propto e^{-|\xi/\sigma|^{\beta}}$. The shape $\beta$ sets
$\gamma_4 = \Gamma(5/\beta)\Gamma(1/\beta)/\Gamma(3/\beta)^2 - 3$:
$\beta=1$ is Laplace, $\beta=2$ Gaussian, $\beta=4$ sub-Gaussian
($\gamma_4\approx-0.81$, platykurtic). The theoretical sweep moves the optimal
$\alpha$ from the fractal side for heavy tails to the signed-parity side for
platykurtic tails: $g_{2,\min}\approx0.913$ at $\alpha=0$ for $\mathrm{GG}(1.5)$,
and $g_{2,\min}\approx0.738$ at $\alpha=1$ for $\mathrm{GG}(4)$.

\paragraph{Student-$t$ with $\nu=6$ degrees of freedom.}
A symmetric heavy-tailed law with finite fourth moment for $\nu>4$
($\gamma_4 = 6/(\nu-4) = 3$ at $\nu=6$). Standardised to unit variance, its
absolute moments are
$\nu_q = \bigl(\tfrac{\nu-2}{\nu}\bigr)^{q/2}\nu^{q/2}
\Gamma(\tfrac{q+1}{2})\Gamma(\tfrac{\nu-q}{2})/[\Gamma(\tfrac12)\Gamma(\tfrac{\nu}{2})]$
for $-1<q<\nu$. Here $g_2(\alpha)$ has an \emph{interior} minimum,
$g_{2,\min}\approx0.862$ near $\alpha\approx0.62$, with endpoints
$g_2(0)\approx0.882$ and $g_2(1)\approx0.889$: neither boundary basis is
optimal, which is exactly the situation the continuous dial is designed for.

\paragraph{Cauchy (as a limiting example).}
Density $f(\xi)=\tfrac{1}{\pi(1+\xi^2)}$; moments $\nu_q$ finite only for $q<1$.
The classical PMM ($\alpha=1$, $p=2$) is inapplicable ($\nu_4=\infty$); the
fractal regime ($\alpha=0$, $p=1/2$) still needs the baseline second moment
$c_2$, which is infinite. Cauchy is therefore retained as a boundary
counterexample of infinite reference variance, not as a covered case.

\subsection[Single-task location: ARE vs. OLS]{Single-task location: ARE vs.\ OLS}\label{sec:ill:are}

\paragraph{Study design.}
Monte Carlo estimation of the location $\theta=\mu$ for four symmetric laws and
$\alpha \in \{0.05, 0.30, 0.70, 0.95\}$:
\begin{itemize}
  \item Distributions: Laplace ($\gamma_4=3$), $\mathrm{GG}(1.5)$ (moderately
        heavy), $\mathrm{GG}(4)$ ($\gamma_4<0$, light), Student-$t(6)$
        (heavy, $\gamma_4=3$).
  \item Sample sizes $N \in \{50,100,200,500\}$; $M=1000$ replications.
  \item Estimators: OLS (sample mean, baseline) and the full PATP estimator
        $\hat\theta_{\PATP,2}(\alpha)=\mathbf{F}_2^{-1}\vec b$
        (\texttt{R/03\_monte\_carlo.R}, using \texttt{R/05}).
  \item Metrics: empirical $\Var[\hat\mu_{\PATP}]$, bias, MSE, ARE.
\end{itemize}

\paragraph{Results.}
Fig.~\ref{fig:are_vs_N} reports $\mathrm{ARE}=\Var[\hat\mu_{\mathrm{OLS}}]/\Var[\hat\mu_{\PATP}]$.
For the full estimator $\mathrm{ARE}\ge1$ across \emph{all} distributions and all
$\alpha$ (PATP never below OLS), with the gain located where theory predicts:
Laplace peaks near the fractal side ($\mathrm{ARE}\approx1.75$ at $\alpha=0.05$,
$N=500$); $\mathrm{GG}(4)$ peaks near the signed-parity side
($\mathrm{ARE}\approx1.41$ at $\alpha=0.95$); Student-$t(6)$ peaks in the
interior. This is finite-sample evidence about the typical shape of
$g_2(\alpha)$, not a proof of general unimodality.

\begin{figure}[!htb]
  \centering
  \includegraphics[width=0.95\linewidth]{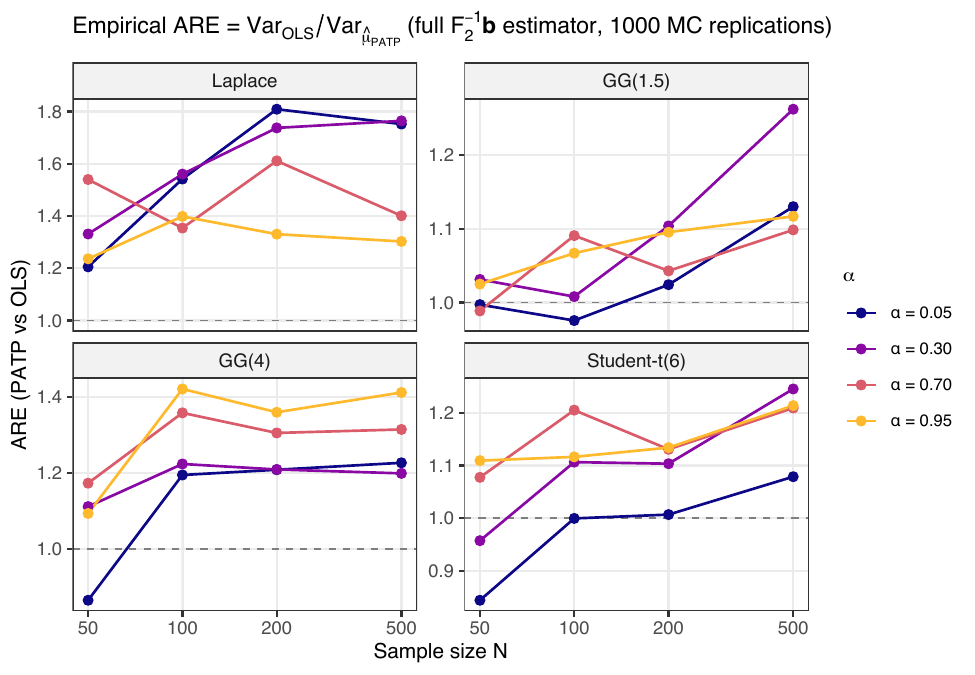}
  \caption{Empirical $\mathrm{ARE}=\Var[\hat\mu_{\mathrm{OLS}}]/\Var[\hat\mu_{\PATP}]$
    of the full $\mathbf{F}_2^{-1}\vec b$ estimator versus~$N$ (1000 MC
    replications). Points above the dashed line ($\mathrm{ARE}=1$) indicate PATP
    is more efficient than OLS; the gain shifts from the fractal side (Laplace)
    to the signed-parity side ($\mathrm{GG}(4)$) to the interior (Student-$t$),
    tracing the adaptive role of~$\alpha$.}
  \label{fig:are_vs_N}
\end{figure}

\subsection[Validation: convergence of empirical g2 to theory]{Validation: convergence of the empirical $g_2$ to the closed form}\label{sec:ill:validation}

The central claim of~\S\ref{sec:efficiency} is that $g_2(\alpha)$ is the
asymptotic variance ratio of the PATP estimator. Fig.~\ref{fig:validation}
validates this directly: for each symmetric law the empirical
$\hat g_2(\alpha)=\Var[\hat\mu_{\PATP}]/\Var[\hat\mu_{\mathrm{OLS}}]$ of the full
estimator is plotted against $N$ (up to $N=4000$, $M=5000$ replications) at the
fractal ($\alpha=0.05$) and signed-parity ($\alpha=0.95$) endpoints, alongside
the closed-form value (dashed). The empirical ratios converge onto the
theoretical lines: at $N=4000$ the gap is $0.9$--$1.1\%$ for Laplace and
$0.7$--$1.7\%$ for $\mathrm{GG}(1.5)$, and within a few per cent for
$\mathrm{GG}(4)$ and Student-$t(6)$ (each bootstrap point carries a Monte Carlo
standard error of order $2\%$).

\begin{figure}[!htb]
  \centering
  \includegraphics[width=0.95\linewidth]{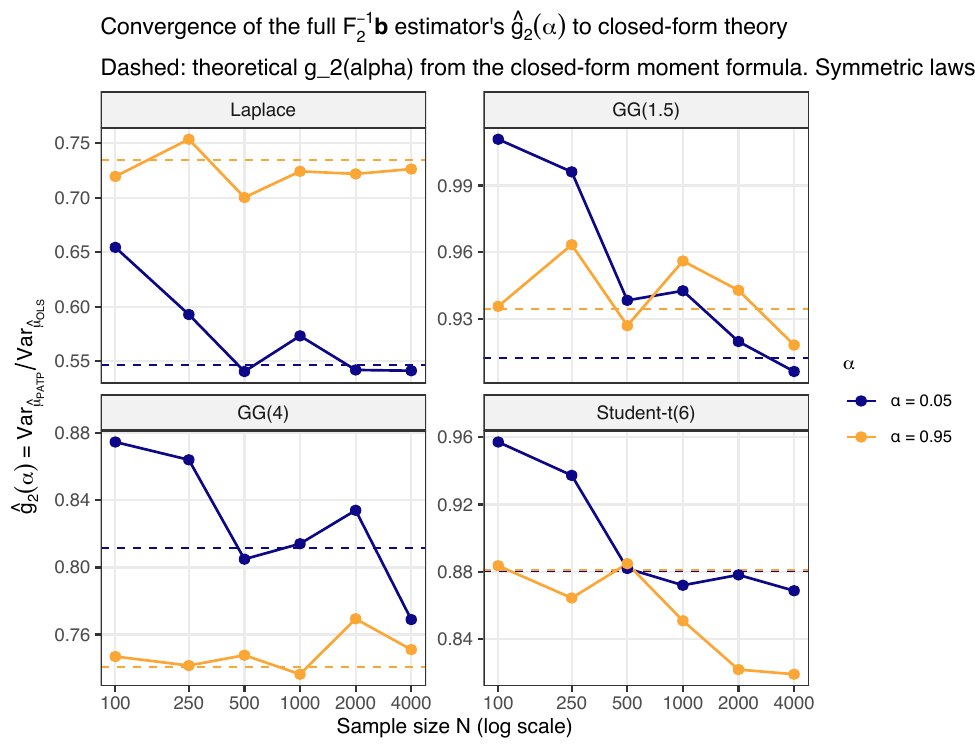}
  \caption{Convergence of the full $\mathbf{F}_2^{-1}\vec b$ estimator's
    empirical $\hat g_2(\alpha)$ to the closed-form $g_2(\alpha)$
    (dashed) as $N$ grows, for the four symmetric laws at $\alpha\in\{0.05,0.95\}$
    ($M=5000$). The empirical ratios track the theoretical values, confirming
    that the closed form is the finite-sample efficiency of the estimator.}
  \label{fig:validation}
\end{figure}

To show why the full estimator is required, Fig.~\ref{fig:proxy_vs_full}
contrasts it with the scalar signed-power M-estimator that solves only the
single equation $\E[\mathrm{sign}(\xi)|\xi|^p]=0$: the latter departs sharply
from the closed form on the signed-parity side (e.g.\ for Laplace at
$\alpha=0.95$ its empirical $g_2$ exceeds $2.5$, indicating a loss against OLS),
whereas the full normal-equation estimator follows the theoretical curve.

\begin{figure}[!htb]
  \centering
  \includegraphics[width=0.9\linewidth]{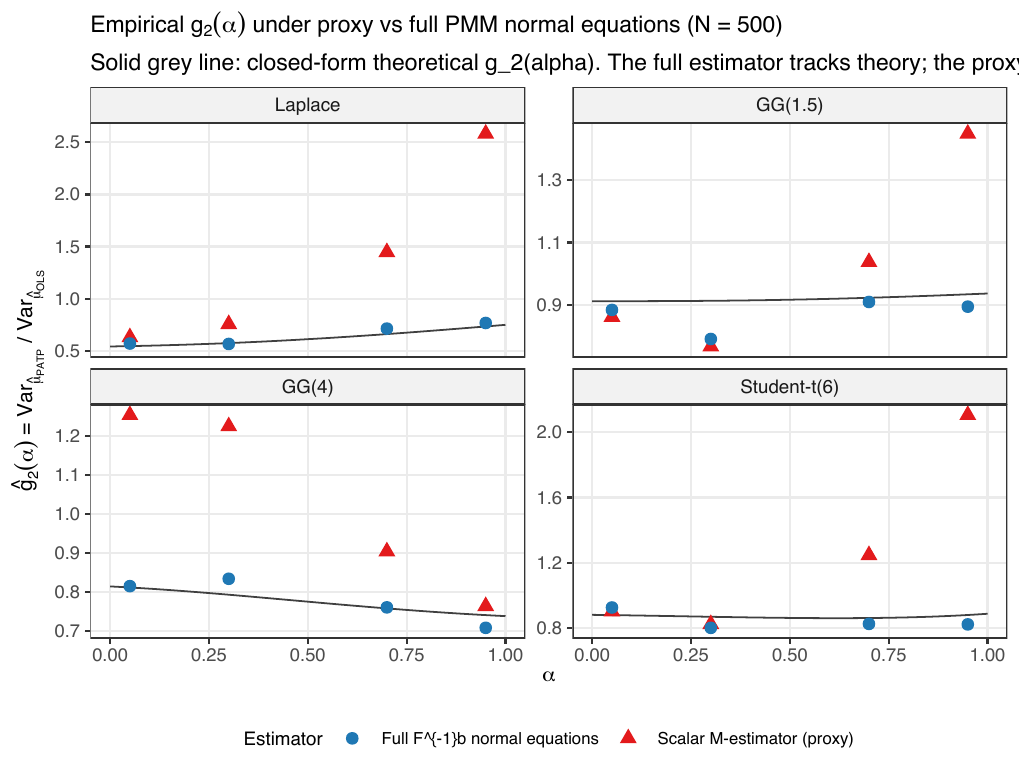}
  \caption{Empirical $\hat g_2(\alpha)$ at $N=500$ under the scalar M-estimator
    proxy (red) and the full $\mathbf{F}_2^{-1}\vec b$ normal-equation estimator
    (blue), against the closed-form $g_2(\alpha)$ (grey). Only the full estimator
    tracks the theory; the scalar proxy diverges on the signed-parity side.}
  \label{fig:proxy_vs_full}
\end{figure}

\paragraph{Bias.}
Fig.~\ref{fig:bias_var} confirms the estimator is practically unbiased on these
symmetric laws ($|\mathrm{bias}|$ one-to-two orders below $\sqrt{\Var}$), so the
efficiency change is driven by variance, not bias.

\begin{figure}[!htb]
  \centering
  \includegraphics[width=0.95\linewidth]{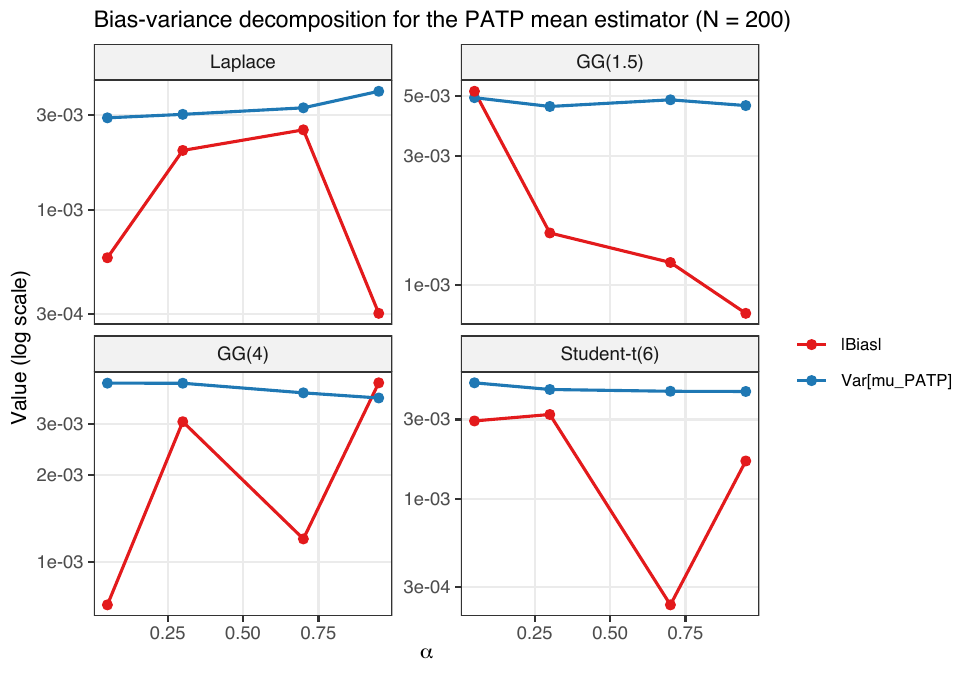}
  \caption{Bias-variance decomposition of the full PATP estimator ($N=200$):
    variance (blue) and absolute bias (red) versus $\alpha$. In all regimes
    $|\mathrm{bias}|\ll\sqrt{\Var}$.}
  \label{fig:bias_var}
\end{figure}

\subsection{Robust baselines and alpha ablation}\label{sec:ill:baselines}

\paragraph{Baselines.}
On the same Monte Carlo design the pipeline also computes six scalar location
baselines (sample mean, median, $10\%$ trimmed and winsorized means, Huber
location, median-of-means). For
Laplace at $N=100$ the median attains relative MSE $\approx0.53$ against the
sample mean and the full PATP estimator at $\alpha=0.05$ attains
$\hat g_2\approx0.61$, so PATP is competitive with the strongest classical robust
estimators while retaining a closed-form efficiency. For $\mathrm{GG}(4)$ the
mean is near-best, consistent with light tails and warning against any universal
superiority claim.

\paragraph{Alpha ablation.}
The grid ablation shows the
expected directional pattern: Laplace selects the fractal side, $\mathrm{GG}(4)$
the signed-parity side, and Student-$t(6)$ the interior, supporting the use of
$\alpha$ as an adaptive tuning parameter selected by the grid criterion of
\S\ref{sec:alg:calibration}.

\paragraph{Runtime.}
The lightweight benchmark confirms the
$\mathcal{O}(N)$ per-iteration cost: at $N=10^5$ the full estimator runs in
$\approx1.1\times10^{-2}$\,s per fit, the same order as Huber location.

\subsection{Regression coefficients}\label{sec:ill:regression}

Because $g_2(\alpha)$ multiplies the OLS coefficient covariance through the same
scalar factor, the variance reduction carries over from location to linear
regression. A Monte Carlo check on $y=\beta_0+\beta_1 x+\xi$ with symmetric
errors ($n=400$, $M=2000$) confirms that the
empirical slope-variance ratio $\Var[\hat\beta_{1,\PATP}]/\Var[\hat\beta_{1,\mathrm{OLS}}]$
matches the closed-form $g_2(\alpha)$ to within a few per cent, with the
degenerate value $g_2(1/2)=1$ recovered exactly (table omitted). For example, for $\mathrm{GG}(4)$ errors at
$\alpha=0.95$ the empirical ratio is $0.763$ against the predicted $0.741$.

\subsection{Application to real data}\label{sec:ill:realdata}

We demonstrate the estimator on real, non-simulated data: daily log-returns of
the four indices in the \texttt{EuStockMarkets} data set (1991--1998,
$N=1859$). Equity-index returns are leptokurtic and approximately symmetric, so
they fall in the regime of the symmetric $g_2(\alpha)$. We select the most
symmetric, leptokurtic series (FTSE: skewness $0.11$, excess kurtosis $2.64$)
and estimate its location. The sample is first standardised to unit robust scale
($z=(x-\mathrm{med})/\mathrm{MAD}$); since $g_2$ is scale-invariant and location
estimation is scale-equivariant this does not change the target, and it places
the data in the unit-scale regime in which the estimator was validated.

Following \S\ref{sec:alg:calibration}, $\alpha^{*}$ is chosen by the robust grid
criterion --- the value minimising the bootstrap variance of the estimator
($B=2000$). The realised efficiency
$\hat g_2(\alpha)=\Var[\hat\mu_{\PATP}]/\Var[\hat\mu_{\mathrm{mean}}]$ is a smooth,
convex curve over the grid with an interior minimum at $\alpha^{*}=0.55$,
\[
  \hat g_2(\alpha^{*}) = 0.868
  \quad\Longrightarrow\quad 13.2\%\ \text{variance reduction vs.\ the sample mean.}
\]
The result agrees with theory: matching the standardised excess kurtosis
($2.64$) to a Student-$t$ gives $\nu\approx6.3$, whose closed-form
$g_2(\alpha^{*})=0.874$ differs from the realised value by $0.7\%$. PATP is also
competitive with classical robust baselines on the same data (median
$\hat g_2=0.859$, Huber $\hat g_2=0.888$), while uniquely providing an analytic
efficiency and an interpretable selected exponent.

Fig.~\ref{fig:real_data_bootstrap} contrasts the bootstrap distributions of the
competing location estimators: the PATP estimator at $\alpha^{*}$ is markedly
tighter than the sample mean and on par with the robust baselines, the visual
counterpart of the $13.2\%$ variance reduction. That this gain is modest --- and
$\alpha^{*}$ only weakly identified --- reflects the near-Gaussian tails of equity
returns, a shallowness made explicit across heavier-tailed series below
(Fig.~\ref{fig:real_data_panel}). The numbers and figure are reproducible from the
accompanying \proglang{R} pipeline.

\begin{figure}[!htb]
  \centering
  \includegraphics[width=0.95\linewidth]{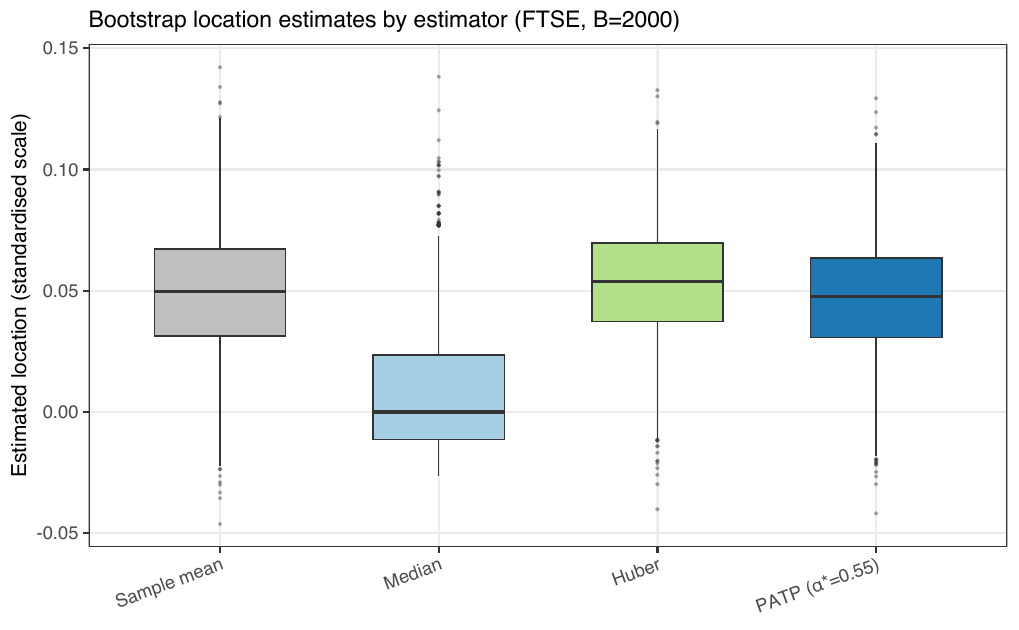}
  \caption{Bootstrap distributions ($B=2000$) of the competing location
    estimators for the FTSE daily log-return location (\texttt{EuStockMarkets},
    standardised to unit robust scale); the PATP estimator at $\alpha^{*}=0.55$
    achieves a spread comparable to the robust baselines and well below the
    sample mean.}
  \label{fig:real_data_bootstrap}
\end{figure}

\paragraph{The adaptive role of $\alpha$ across tail weights.}
The shallow FTSE surface is itself diagnostic: equity-index returns are only
mildly leptokurtic ($\gamma_4\approx2.6$), so no basis is far from optimal and
the gain is modest. To exhibit the adaptive role of $\alpha$ directly we apply
the \emph{same} estimator to a panel of symmetric real series of increasing tail
weight (all $|\hat\gamma_3|<0.12$): the FTSE returns above; daily log-changes of
the broad trade-weighted U.S.\ dollar index (FRED \texttt{DTWEXBGS}; $N=5125$,
$\gamma_4\approx4.3$); and daily log-returns of the CAD/USD spot rate (FRED
\texttt{DEXCAUS}, 1971--2026; $N=13{,}909$, $\gamma_4\approx9.3$) --- a
commodity-currency series that is heavy-tailed yet, unusually, almost symmetric
($\hat\gamma_3=-0.08$). As the tail weight rises, the data-driven $\alpha^{*}$
slides monotonically from the signed-parity side toward the fractal endpoint
($0.55\to0.40\to0.05$) and the variance reduction over the sample mean deepens
($13.2\%\to28.2\%\to53.9\%$); see Fig.~\ref{fig:real_data_panel}.

On the heaviest series the fixed classical basis ($\alpha=1$) carries
$1.57\times$ the variance of the adaptive optimum ($\hat g_2=0.724$ vs.\ $0.462$):
adapting $\alpha$ --- here all the way into the fractal regime $\alpha^{*}=0.05$,
inaccessible to the integer-power PMM --- recovers efficiency that no fixed basis
attains, and PATP also overtakes the Huber baseline as the tails grow heavier
($\hat g_2=0.462$ vs.\ $0.600$ on CAD/USD). The selected exponents trace an
empirical $\alpha^{*}(\gamma_4)$ consistent with the kurtosis-matched Student-$t$
prediction (dotted curves in Fig.~\ref{fig:real_data_panel}), the data-side
counterpart of the topographic map of \S\ref{sec:ill:topographic}. The trend is
robust to the treatment of unchanged-quote days: excluding zero-return
observations, CAD/USD still attains a $39\%$ reduction at $\alpha^{*}=0.45$. (On
such price-quantised series the sample median is not a stable comparator --- its
bootstrap variance swings from $0.31\times$ to over $4\times$ the sample mean's
under that change --- so Huber is used as the robust baseline.)

\begin{figure}[!htb]
  \centering
  \includegraphics[width=0.95\linewidth]{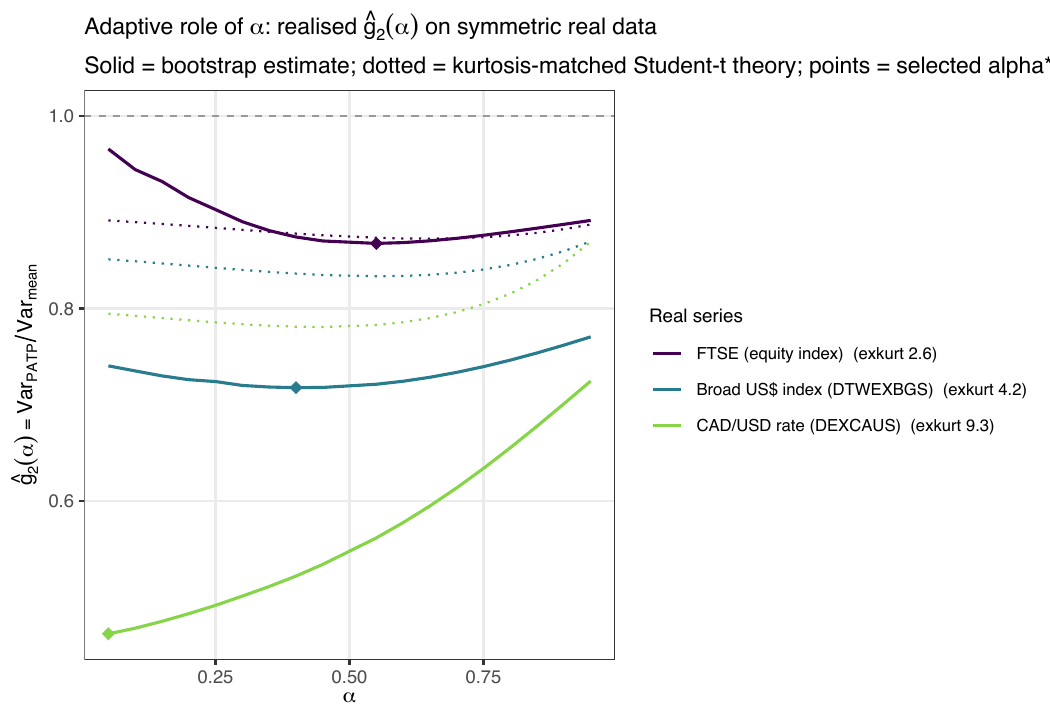}
  \caption{Adaptive role of $\alpha$ on symmetric real data of increasing tail
    weight. Each solid curve is the realised
    $\hat g_2(\alpha)=\Var[\hat\mu_{\PATP}]/\Var[\hat\mu_{\mathrm{mean}}]$
    (bootstrap, $B=2000$); dotted curves are the kurtosis-matched Student-$t$
    theory; markers show the selected $\alpha^{*}$. As $\gamma_4$ grows
    ($2.6\to4.3\to9.3$) the curve deepens and its minimum slides from the
    signed-parity side ($\alpha^{*}=0.55$, FTSE) to the fractal endpoint
    ($\alpha^{*}=0.05$, CAD/USD).}
  \label{fig:real_data_panel}
\end{figure}

\subsection[Topographic plane (kappa, k)]{Topographic plane $(\varkappa, k)$}\label{sec:ill:topographic}

The topographic classification of~\S\ref{sec:bg:entropy} places the canonical
laws on the plane $(\varkappa, k)$ of contrexcess and entropy coefficient.
Fig.~\ref{fig:topographic} (in~\S\ref{sec:bg:entropy}) shows the
$\mathrm{GG}(\beta)$ curve and the canonical points; the symmetric laws of this
section lie along it. The advantage of the two-dimensional representation over
$\alpha^{*}(\gamma_4)$ alone is that it resolves the shape-identification
ambiguity discussed in~\S\ref{sec:bg:entropy}: distributions that share a value
of $\gamma_4$ but differ in tail structure (e.g.\ Laplace and the triangular
law) separate in $k$, which tracks the optimal~$\alpha^{*}$ more faithfully. A
full empirical map of $\alpha^{*}(\varkappa,k)$ across a wider population of
distributions is a natural direction for the cross-domain study
(\S\ref{sec:disc:future}).

\section{Discussion}\label{sec:discussion}

\subsection{Connections to neighboring semiparametric frameworks}\label{sec:disc:related}

\paragraph{PATP $\leftrightarrow$ GMM.}
Hansen's Generalized Method of Moments~\citep{hansen1982gmm} estimates a
parameter through a system of moment conditions
$\E[g(X; \theta)] = 0$ with a weighting matrix $W$ optimized to minimize
asymptotic variance. PATP can be interpreted as a \emph{special form of GMM},
where the moment functions are PATP basis functions
$g_i(X; \theta, \alpha) = \varphi_i(X - R(\theta); \alpha)$, and the
weighting matrix is replaced by the centered correlant matrix
$\mathbf{F}_S(\theta; \alpha)$. A formal proof of this equivalence remains
an open problem; in particular, whether the PATP weighting matrix
$\mathbf{F}_S(\alpha)$ is optimal under Hansen's criterion for the
corresponding set of moment conditions is a question requiring separate
analysis.

\paragraph{PATP $\leftrightarrow$ M-estimators.}
Huber's M-estimators~\citep{huber1981robust} are defined through
$\psi$-functions optimized for robustness to outliers. The PATP estimator
\eqref{eq:patp-estimator} is an $M$-estimator with $\psi$-function
$\psi_{\PATP}(\xi; \alpha) = \sum_i h_i^{*}(\alpha) \cdot
\partial_{\xi}\varphi_i(\xi; \alpha)$, i.e., a linear combination of
even-order derivatives of the basis functions with weight coefficients
depending on the estimated cumulants. The distinction of PATP from
classical Huber or Tukey M-estimators is the automatic selection of the
$\psi$-function from the estimated characteristics of the noise distribution,
rather than an a priori choice of regularization parameters.

\paragraph{PATP $\leftrightarrow$ L-moments.}
Hosking's L-moments~\citep{hosking1990lmoments} are linear combinations of
order statistics and offer an alternative parameterization of distributions
that is robust to outliers. Conceptually close to PATP in logic~--- using
moment-like characteristics without assuming finiteness of classical
moments~--- L-moments are realized through order statistics, while PATP
uses fractional absolute moments. A formal bridge between these two
frameworks does not yet exist in the literature; constructing one is a
promising topic for a separate paper.

\paragraph{PATP $\leftrightarrow$ SLS.}
The Second-order Least Squares method
(SLS)~\citep{wang2008sls} augments the OLS criterion with a quadratic term
from second-moment deviations, yielding a strict gain over OLS for
asymmetric distributions. The quantitative equivalence of PMM2
(i.e., the classical even-power PMM2 case, not the Form-B signed-parity
endpoint used in this paper) and SLS
was demonstrated in~\citep{zabolotnii2024nonlinear} for nonlinear
regression problems with $\chi^2$ errors. PATP should therefore be read as a
neighboring continuous fractional-power construction rather than as a literal
SLS generalization. The extension to $\alpha < 1/2$ (the fractal branch) is a
region inaccessible to classical SLS.

\subsection{Robust statistics for heavy tails}\label{sec:disc:robust}

\paragraph{Position of PATP.}
PATP fits into the spectrum of semiparametric methods between classical PMM
of Kunchenko (which requires finite high-order moments) and nonparametric
robust methods (median, MAD, which are trimmed at outliers but ignore the
distributional structure). The advantage of PATP is the closed-form
$g_2(\alpha)$, which allows \emph{analytical} efficiency estimation for
each fixed $\alpha$ and \emph{adaptive} calibration to the estimated noise
profile. Bootstrap remains useful for sensitivity analysis of the selected
$\alpha$, especially near the singular point $\alpha=1/2$.
This is a model- and moment-structured goal, distinct from Catoni,
median-of-means, and related estimators whose main target is
distribution-free concentration under weak assumptions.

\paragraph{Integrability constraints.}
All results in the present paper assume finiteness of the moments required by
\S\ref{sec:eff:s2}. For $\alpha \in [0, 1]$ and $i = 2$ the highest
nonlinear exponent is $2p_2(\alpha) \leq 4$ (attained at $\alpha = 1$).
However, because the present $S=2$ construction keeps the linear basis
function $\varphi_1(\xi)=\xi$, the matrix $\mathbf F_2$ also contains
$F_{11}=c_2$. Thus PATP$(S=2)$ relaxes the classical fourth-moment
requirement to a finite-variance requirement near $\alpha=0$; it does not
provide an OLS-referenced variance formula for infinite-variance laws such
as Cauchy or stable distributions with $\alpha_s<2$.

For such laws, a different target location functional, bounded or
kernel-weighted moments, or a modified PATP basis without the unsmoothed
linear component is required. Analytical investigation of these variants is
beyond the scope of the present paper and remains a direction for future
work.

\paragraph{Honest acknowledgment of unproven claims.}
The unimodality of $g_2(\alpha)$ as a function of $\alpha$ (for typical
profiles $\gamma_3, \gamma_4$) is not proved here. The figures in this paper
are diagnostic evidence about typical curve shapes, not a replacement for a
general theorem. A formal proof would require analysis of a rational function
of several integral characteristics.

\paragraph{Scope.}
The present manuscript establishes the Form-B PATP basis, its moment
conditions, the $S=2$ efficiency formula $g_2(\alpha)$, and a reproducible
simulation study (\S\ref{sec:illustrations}) that validates the formula for
symmetric error laws using the full $\mathbf{F}_S^{-1}\vec b$ normal-equation
estimator. It does not claim a distribution-free robust estimator for arbitrary
(including infinite-variance) laws; the asymmetric Form-B residual term and the
infinite-variance boundary are deferred to future work.

\subsection{Future directions}\label{sec:disc:future}

\paragraph{Cross-domain transfer.}
Empirical verification of the cross-domain transferability of $\alpha^{*}$
across detection, classification, and denoising examples remains future work.
The likely task-confounder hypothesis
$\alpha^{*} = f(\gamma_3, \gamma_4, \mathrm{task})$ points to the need for
further extension of the theoretical model.

\paragraph{Applied implementation in FP-regression.}
The discrete $\alpha = 0$ implementation of PATP can be combined with the
Royston--Altman basis for fractional polynomial regression~\citep{royston1994}.
Developing this applied regression variant is left as future work; the present
paper focuses on the basis construction and the $S=2$ efficiency formula.

\paragraph{Lean 4 formalization.}
The limiting cases of the exponents $p_i(\alpha)$ and the basis
$\varphi_i(\xi; \alpha)$ have been formalized in Lean~4 with Mathlib v4.26.0
across five small modules; see Appendix~\ref{app:lean-facts} for theorem
names. This formal layer is intentionally algebraic. Moment existence,
$L_2(P)$ positive definiteness, and asymptotic distribution theory remain
conventional mathematical assumptions and proof obligations.

\paragraph{Extension of the PATP family.}
A separate direction is the introduction of a \emph{continuously vectorial}
$\vec\alpha \in [0,1]^{S-1}$ in place of the scalar $\alpha$: each basis
exponent $p_i$ is governed by its own parameter $\alpha_i$. This increases
adaptivity at the cost of overfitting risk for small samples. Numerical
comparison of the scalar and vector PATP families for a broad class of
biostatistical and financial data is the subject of a separate study.

\section{Conclusions}\label{sec:conclusion}

The Parametrically Adaptive Transition Polynomial (PATP) is a continuously
$\alpha$-parameterized family of sign-preserving fractional-power basis
functions that fits into Kunchenko's generalized framework of stochastic
polynomials as a continuous interpolator between three structural points: the
fractal regime ($\alpha = 0$, $p_i = 1/i$), the degenerate linear regime
($\alpha = 1/2$, all $p_i = 1$), and the signed-parity integer-polynomial
regime ($\alpha = 1$, $p_i = i$).

\paragraph{Main Theoretical Results.}
\begin{enumerate}
  \item \textbf{Formal definition of the PATP family} (\S\ref{sec:patp})
        with a quadratic Lagrangian parameterization
        $p_i(\alpha) = 1/i + (4 - i - 3/i)\alpha + (2i - 4 + 2/i)\alpha^2$,
        machine-verified in Lean 4 (modules \texttt{PATP.Param},
        \texttt{PATP.Basis}).

  \item \textbf{Closed-form formula for $g_2(\alpha)$} (\S\ref{sec:eff:s2})
        for the variance reduction coefficient of the degree $S=2$ PATP
        estimate in the mean estimation problem, expressed in terms of
        four fractional moments $\nu_{p-1}, \nu_{p+1}, \nu_{2p}, \sigma_p$
        with a single controlling exponent $p = p_2(\alpha)$.

  \item \textbf{Degeneracy and limiting cases} (\S\ref{sec:eff:unimodal}):
        The matrix $\mathbf{F}_2(\alpha = 1/2)$ is strictly singular;
        the signed-parity integer-power regime ($\alpha = 1$) and the
        fractal regime ($\alpha = 0$) yield structurally different
        $g_2$ formulas, with different integrability requirements
        (a fourth moment for the classical reference formula versus finite
        variance plus fractional nonlinear moments near $\alpha=0$).

  \item \textbf{Structural expansion of integrability.} In the fractal
        regime $\alpha = 0$, the present $S=2$ formula lowers the nonlinear
        requirement from fourth-order moments to fractional orders up to
        $3/2$, while retaining the finite-variance requirement caused by
        the linear basis component.
\end{enumerate}

\paragraph{Empirical validation.}
A reproducible \proglang{R} study confirms the theory: the full
$\mathbf{F}_2^{-1}\vec b$ estimator's Monte Carlo $g_2(\alpha)$ converges to the
closed form across the symmetric canonical laws (within $1$--$2\%$ at $N=4000$
for Laplace and $\mathrm{GG}(1.5)$; \S\ref{sec:ill:validation}), the same factor
governs the regression-slope variance, and a real-data location example (FTSE
daily log-returns) attains a $13\%$ variance reduction over the sample mean,
matching the kurtosis-matched Student-$t$ prediction to within $0.7\%$.

\paragraph{Practical Recommendations.}
\begin{itemize}
  \item For symmetric heavy-tailed errors (Laplace, $\mathrm{GG}(\beta<2)$,
        Student-$t$): the fractal branch $\hat\alpha^{*} \in [0, 0.5)$ is the
        relevant regime; on Laplace it attains $\mathrm{ARE}\approx1.75$ over OLS.
  \item For symmetric light-tailed (platykurtic) errors ($\mathrm{GG}(\beta>2)$):
        the signed-parity branch $\hat\alpha^{*} \in (0.5, 1]$ is preferred
        ($\mathrm{ARE}\approx1.4$ for $\mathrm{GG}(4)$ at $\alpha=1$).
  \item For moderate symmetric leptokurtosis the optimum is interior; select
        $\hat\alpha^{*}$ by the grid-plus-bootstrap criterion of
        \S\ref{sec:alg:calibration}, as in the FTSE real-data example
        ($\hat\alpha^{*}=0.55$, $13\%$ variance reduction).
  \item For asymmetric errors the Form-B basis carries an $O(\sigma_p)$ residual
        term and lies outside the present scope; for infinite-variance laws
        (Cauchy, stable $\alpha_s<2$) the OLS-referenced $g_2(\alpha)$ does not
        apply (future work).
\end{itemize}

\paragraph{Limitations.}
The formula for $g_2(\alpha)$ proven in \S\ref{sec:eff:s2} requires:
(a) a specified location convention for the residual $\xi=X-\theta_0$;
(b) the finiteness of $c_2$ and fractional moments
$\nu_{p-1}, \nu_{p+1}, \nu_{2p}$
for a given $p = p_2(\alpha)$;
(c) no point mass at zero and mild local regularity near zero for
the derivative of the signed-parity function to be integrable.
The simulation study and the real-data example use the full
$\mathbf{F}_2^{-1}\vec b$ normal-equation estimator, whose empirical $g_2(\alpha)$
converges to the closed form (\S\ref{sec:ill:validation}); the scope is limited
to symmetric error laws, since for asymmetric laws the Form-B basis retains an
$O(\sigma_p)$ bias.

Typical curve shapes of $g_2(\alpha)$ have been explored numerically in this
paper, but a formal unimodality theorem is not claimed in the present
manuscript.

\paragraph{Perspectives.}
The PATP as a methodological basis opens several directions:
applied implementation in fractional-polynomial regression;
formalization of the remaining $g_2(\alpha)$ algebra in Lean 4;
measure-theoretic and asymptotic formalization beyond the present Lean
algebraic layer; vectorial expansion $\vec\alpha \in [0,1]^{S-1}$; integration with
L-moments and robust statistics for heavy tails; extensive
cross-domain testing with task-conditioning. The current paper establishes
the foundation for all these directions and identifies the limitations for each.

\paragraph{Acknowledgments.}
The author expresses gratitude to the scientific school of Yuriy Petrovych
Kunchenko (Cherkasy State Technological University), whose ideas form the
basis of the current work.

\appendix
\section{Technical Notes Kept Outside the Main Line}\label{app:technical}

\subsection{Local integrability at the origin}\label{app:local-integrability}

The derivative of the signed-parity basis
$\varphi(\xi)=\mathrm{sign}(\xi)|\xi|^p$ is
$p|\xi|^{p-1}$ for $\xi\neq0$.  For $p<1$ this derivative is unbounded at
zero, but unboundedness is not by itself a divergence of the expectation.
If the distribution has no atom at zero and its density is bounded by $M$ on
$[-\varepsilon,\varepsilon]$, then for every $p>0$
\[
  \E\!\left[|\xi|^{p-1}\mathbf{1}_{|\xi|\leq\varepsilon}\right]
  \leq
  2M\int_0^\varepsilon x^{p-1}\,dx
  =
  \frac{2M}{p}\varepsilon^p
  <\infty .
\]
Thus the fractal endpoint of the $S=2$ construction, where
$p=p_2(0)=1/2$, is locally integrable under ordinary bounded-density
regularity near the center.  The real obstruction for extremely heavy-tailed
laws is the tail condition in the moments $\nu_{p+1}$ and $\nu_{2p}$, not a
generic singularity at zero.  Atoms or singular densities at the center still
require smoothing or a bracketing solver, as described in
\S\ref{sec:alg:nr}.

\subsection{Lean-verified algebraic layer}\label{app:lean-facts}

Table~\ref{tab:lean-structural} records the formalized part of the PATP
construction.  The purpose of this Lean layer is deliberately narrow: it
checks the exponent identities, the signed-parity basis identities, the
midpoint collapse, and the final algebraic step in the displayed
$g_2(\alpha)$ formula.  It does not claim to formalize moment existence,
asymptotic normality, or distribution-free robustness.

\begin{table}[!htb]
\centering
\small
\setlength{\tabcolsep}{3pt}
\begin{tabular}{p{0.34\linewidth}p{0.47\linewidth}p{0.12\linewidth}}
\hline
Structural fact & Lean module / theorem & Scope \\
\hline
$p_i(0)=1/i$, $p_i(1/2)=1$, $p_i(1)=i$
  & \texttt{PATP.Param}: boundary lemmas
  & Algebraic \\
$\varphi_i(\xi;1/2)=\xi$ and oddness of Form-B
  & \texttt{PATP.Basis}: midpoint and oddness lemmas
  & Basis \\
$p_i(\alpha)-p_j(\alpha)$ factorization
  & \texttt{PATP.ExponentSeparation}: factorization lemma
  & Algebraic \\
Pairwise basis collapse at $\alpha=1/2$
  & \texttt{PATP.Degeneracy}: pairwise collapse lemma
  & Basis \\
Final algebraic step in $g_2(\alpha)$
  & \texttt{PATP.G2Algebra}: \texttt{g2Formula\_eq\_from\_bFb}
  & Algebraic \\
\hline
\end{tabular}
\caption{Lean-verified structural facts used in the PATP definition and
the $S=2$ efficiency formula.  Probability-theoretic assumptions remain in
the manuscript, because they concern the underlying distribution rather than
pure algebra.}
\label{tab:lean-structural}
\end{table}

\subsection{EstemPMM integration roadmap}\label{app:estempmm-roadmap}

The reference R pipeline at
\url{https://github.com/SZabolotnii/Ku-PATP-code-supplement}
is structured as a self-contained replication artefact for this paper;
its long-term home is the \texttt{EstemPMM} package on CRAN
(currently shipping the integer-power PMM2 and PMM3 estimators,
\citealp{zabolotnii2026estempmmcran}). Table~\ref{tab:estempmm-roadmap}
maps the prototype functions in \texttt{R/} to the planned API in the
next \texttt{EstemPMM} release. The aim is to graduate the PATP
prototype without rewriting any of the closed-form algebra of
\S\ref{sec:efficiency}: the full $\mathbf{F}_2^{-1}\mathbf{b}$ solver of
Algorithm~\ref{alg:patp-full} becomes a single user-facing function, and
the diagnostics in \S\ref{sec:alg:repro} become standard \texttt{summary}
methods on the returned fit object.

\begin{table}[!htb]
\centering
\small
\setlength{\tabcolsep}{3pt}
\begin{tabular}{p{0.39\linewidth}p{0.55\linewidth}}
\hline
Prototype function (this paper) & Planned \texttt{EstemPMM} API \\
\hline
\texttt{p\_i(i, alpha)}
  & \texttt{patp\_exponent(i, alpha)} (utility) \\
\texttt{empirical\_moments(x, mu, p)}
  & internal helper of \texttt{patp\_moments(fit)} \\
\texttt{build\_F2\_b\_hstar(x, mu, p)}
  & \texttt{patp\_correlant(x, alpha, mu)} returning $\mathbf{F}_2,\mathbf{b},\mathbf{h}^{*}$ \\
\texttt{patp\_full\_estimator(x, alpha, \ldots)}
  & \texttt{lm\_patp(formula, data, alpha, control)} (regression-aware)
    and \texttt{patp\_location(x, alpha)} (scalar) \\
\texttt{patp\_proxy\_internal(x, alpha, \ldots)}
  & \texttt{patp\_location(x, alpha, method = "proxy")} \\
\texttt{g2\_alpha(alpha, c2, nu, sigma)}
  & \texttt{patp\_g2(alpha, distribution)} closed-form ARE evaluator \\
Robust baselines block in \texttt{R/03\_monte\_carlo.R}
  & reused via \texttt{robustbase}; not re-exported \\
\hline
\end{tabular}
\caption{Mapping from the manuscript's reference R pipeline to the
planned PATP API in \texttt{EstemPMM}. The user-facing entry points
(\texttt{lm\_patp}, \texttt{patp\_location}, \texttt{patp\_g2}) mirror
the existing \texttt{lm\_pmm2} / \texttt{lm\_pmm3} naming convention of
the CRAN package, so existing PMM users can adopt PATP without a new
mental model.}
\label{tab:estempmm-roadmap}
\end{table}

The graduation does not require any new theorems and keeps all assumptions
of Theorem~\ref{thm:g2-conditions} explicit at the API boundary.
Asymmetric distributions will trigger an automatic proxy-fallback warning
that quotes the residual bias from the bias term of
Algorithm~\ref{alg:patp-full}.


\end{document}